%% file: _arxiv.tex
\documentclass[a4paper,UKenglish,cleveref, autoref, thm-restate]{lipics-v2021}

\input{lipicsmacros}

\newif\ifanonymous

\hideLIPIcs  

\title{Brief announcement: A special case of maximum flow over time with network changes}

\titlerunning{A special case of maximum flow over time with network changes} 
\ifanonymous
\author{Anonymous}{anonymous institution}{}{}{}

\authorrunning{Anonymous}

\Copyright{Anonymous}

\else 

\author{ Shuchi {Chawla} }{Department of Computer Science, University of Texas, Austin, TX, USA  \and \url{cs.utexas.edu/~shuchi} }{shuchi@cs.utexas.edu}{https://orcid.org/0000-0001-5583-2320}{}

\author{ Kristin {Sheridan}\footnote{A portion of this research performed while at NASA's Glenn Research Center, Cleveland, OH, USA } }{Department of Computer Science, University of Texas, Austin, TX, USA  \and \url{cs.utexas.edu/~kristin} }{kristin@cs.utexas.edu}{https://orcid.org/0000-0002-5524-3259}{This research was funded in part by JPMorgan Chase \& Co.  Any views or opinions expressed herein are solely those of the authors listed, and may differ from the views and opinions expressed by JPMorgan Chase \& Co. or its affiliates. This material is not a product of the Research Department of J.P. Morgan Securities LLC. This material should not be construed as an individual recommendation for any particular client and is not intended as a recommendation of particular securities, financial instruments or strategies for a particular client.  This material does not constitute a solicitation or offer in any jurisdiction.}

\authorrunning{S. Chawla and K. Sheridan} 

\Copyright{Shuchi Chawla and Kristin Sheridan} 

\fi

\ccsdesc[100]{Theory of computation~Design and analysis of algorithms~Mathematical optimization~Discrete optimization~Network optimization} 

\keywords{maximum flow, dynamic flows, flows over time} 

\category{} 

\relatedversion{} 

\funding{This research was funded in part by NSF awards CCF-2217069 and CCF-2225259.}

\acknowledgements{We want to thank the reviewers for their suggestions that helped us improve this paper.}

\nolinenumbers 

\EventEditors{John Q. Open and Joan R. Access}
\EventNoEds{2}
\EventLongTitle{42nd Conference on Very Important Topics (CVIT 2016)}
\EventShortTitle{CVIT 2016}
\EventAcronym{CVIT}
\EventYear{2016}
\EventDate{December 24--27, 2016}
\EventLocation{Little Whinging, United Kingdom}
\EventLogo{}
\SeriesVolume{42}
\ArticleNo{23}

\begin{document}

\maketitle

\begin{abstract}
We consider the problem of finding the value of a maximum flow over time in a network with uniform edge lengths where the edge capacities change at specific time instants.    To solve this problem, we show how to construct a condensed version of a Time Expanded Network (cTEN) whose standard max flow value is the same as the max flow over time on the original network. In particular, for a graph with $n$ nodes, $m$ edges, and $\mu$ {\em critical times} where some edge capacity changes, we obtain a cTEN with $O(n^2\mu)$ nodes and $O(\mu mn)$ edges. This implies that the problem can be solved in $O(\mu^2n^3m)$ time using the combinatorial max flow algorithm of Orlin \cite{orlin2013}, or in $O(\mu^{(1+o(1))}(nm)^{1+o(1)}\log (UT))$ time using the algorithm of Chen et al. \cite{chen2022}, where $U$ is the maximum capacity of any edge and $T$ is the time horizon. We focus on graphs that experience many time changes across the period of interest, as in such graphs the $\mu$ term dominates the runtime.
\end{abstract}

\input{intro}

\input{def}

\input{proofs}

\input{appendix}

\bibliography{main.bib}

\appendix

\input{other_times}

\end{document}

%% file: lipicsmacros.tex
\usepackage[dvipsnames]{xcolor}

\newcommand{\change}[1]{{\color{Mahogany} #1}}

\usepackage{algorithm}
\usepackage[noend]{algpseudocode}

\renewcommand\tilde{\widetilde}
\usepackage{amsopn}

\renewcommand{\st}{^*}
\newcommand{\set}[1]{\{#1\}}
\newcommand{\reals}{\mathbb{R}}

\newcommand{\cT}{\mathcal{T}}

\newcommand{\moveup}[1]{\phi_{#1}^+}
\newcommand{\movedown}[1]{\phi_{#1}^-}

\newcommand{\capac}{u}
\newcommand{\breaks}{\mathcal{T}}
\newcommand{\integers}{\mathbb{Z}}
\newcommand{\ten}{\textsc{TEN}}
\newcommand{\cten}{\textsc{cTEN}}
\newcommand{\critcut}[1]{\textsc{Range}(#1)}

\newcommand{\critnet}[1]{\textsc{CriticalTimes}(#1)}

\newcommand{\forbidden}[3]{\texttt{forbidden}_{#1, #2}(#3)}

%% file: intro.tex
\section{Introduction}



With the rapid ongoing deployment of constellations of small and nano satellites, intersatellite communication systems such as Starlink are quickly realizing the potential of a super fast ``space internet'', bringing access to remote parts of the world \cite{Handley}. But these systems present new challenges for algorithm design: standard routing and communication protocols designed for static networks do not work as-is on space networks.
Objects in space are constantly in motion, and the ability for one object to communicate with another may exist at some times but not at others, such as when their connection is blocked by a planetary body or other object (see Figure~\ref{fig:space-network}). Further, the movement of objects may change the time it takes a message to travel from one point to another, depending on when the message departs. This adds temporal effects to routing in space networking that are not present in traditional networking. 

Temporal effects are also present in terrestrial networks, such as  transportation networks. Consider a shipping company like FedEx transporting large loads between different cities. The scheduling of vehicles faces many temporal constraints -- coordination with scheduled flights;
the availability of vehicles or drivers varying over time, etc. These constraints place transport networks outside the realm of settings most routing or flow algorithms are designed for.

\begin{figure}
    \centering
    \includegraphics[width=\linewidth]{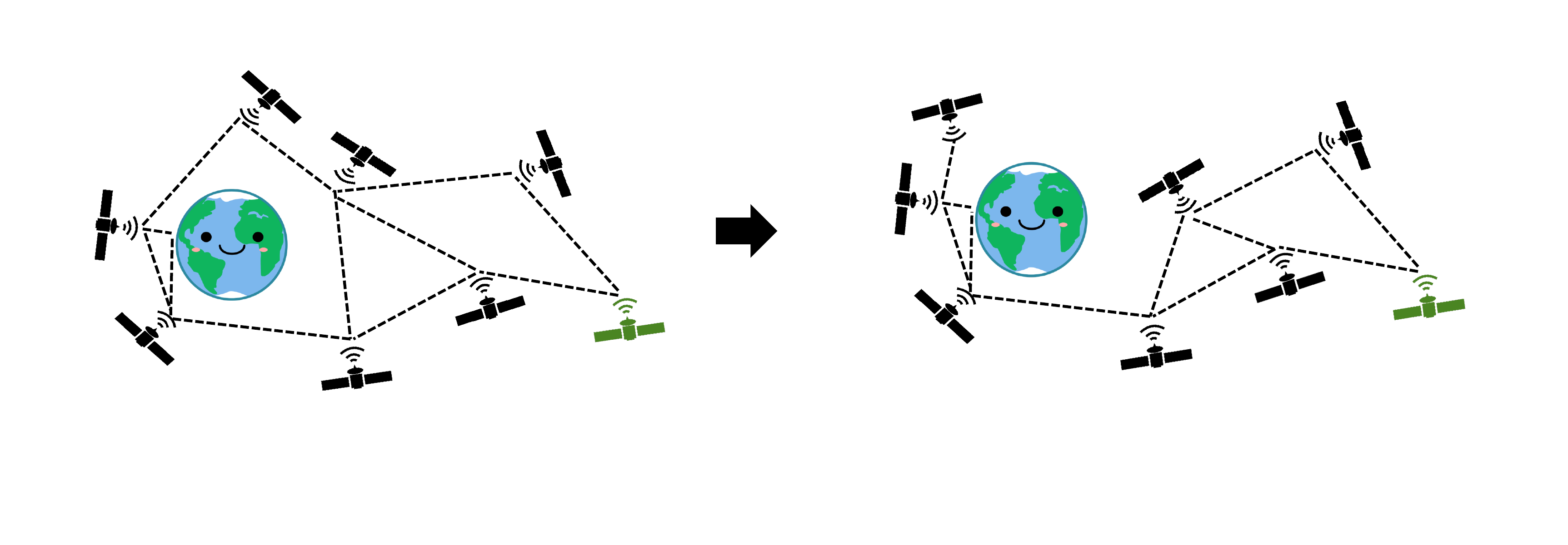}
    \caption{A visualization of the types of changes that can happen in space networks. Due to changes in orientation or movement of other celestial objects, capacities and travel times associated with individual connections may change over time. }
    \label{fig:space-network}
\end{figure}

We focus on a network model where the capacities of edges can vary with time but those variations are known in advance. This is a reasonable model for both of the applications mentioned above. In space networking, for example, objects move according to predictable patterns. The standard ``contact graph'' model \cite{fraire2021,hylton2022} assumes that connections between network nodes exist only for a specified period of time, but all such periods are known ahead of time, and each connection, or edge, has an associated length and capacity. Likewise, for transportation networks, transit times and other temporal constraints are often fixed in advance or predictable. We call network parameters that vary over time {\bf temporal} and parameters that are constant {\bf static}. Temporal networks are networks that contain some temporal parameters, while static networks do not.
{Note that flow problems on static networks \cite{Skutella2009} are different from {\em steady-state} or {\em classical} flow problems (i.e. the traditional flow networks introduced by Ford and Fulkerson \cite{ford1956}), as in the former, flow takes time to traverse edges.}

Our primary question is: {\bf Given a temporal network $N$, a source $s$, a sink $d$, and a time horizon $T$ what is the maximum amount of flow that can be moved from $s$ to $d$ over time horizon $T$?} We call this the maximum flow over time problem on temporal networks, and in this paper we restrict our attention to a special type of network where all edges have a single static length $\tau$,\footnote{In Appendix \ref{sec:otherlengths}, we discuss an extension of our algorithm that is parameterized by (and in fact exponential in) the number of distinct edge lengths in the network.} and all edge capacity functions are piecewise constant.

Let us now specify some key parameters and features of our setting. We consider graphs on $n$ nodes and $m< n^2$ possible directed edges. All edges have a fixed length $\tau\geq 0$, and each edge $ij$ has piecewise constant capacity function $u_{ij}$. We define $\mu_{ij}$ to be the number of constant pieces of the function $u_{ij}$, and $\mu:=\sum_{ij\in E}\mu_{ij}$. That is, $\mu$ is (up to a constant factor) the total number of times {\em any} edge undergoes a capacity change.
We focus on graphs where $\mu$ is very large compared to the number of nodes or edges in the graph (i.e. where the time horizon is very long and the network undergoes many changes), so our primary goal will be to minimize dependence on $\mu$ in our final runtime. Further, note that networks of this class can be specified in space $\Theta(\mu\log U+\log T)$ where $U$ is the maximum capacity of any edge at any time, whereas a network with fully general capacity functions could take $\Omega(T)$ space to describe, as the capacity at each time step must be specified. We assume in this paper that our input networks are always concisely specified using $\Theta(\mu\log U+\log T)$ space.


Most solutions to max flow problems on temporal graphs go through reduction to the transshipment feasibility problem. In the transshipment feasibility problem (TFP), we are given a (temporal or static) network on $n$ nodes and $m$ edges, as well as a set of supplies $v_i$ for each node $i$ in the graph such that $\sum_{i\in V}v_i=0$ and a time horizon $T$. The transshipment feasibility problem asks if there is a flow over the time interval $[0,T]$ such that the net flow out of each node $i$ is exactly $v_i$. Hoppe and Tardos \cite{hoppe2000} give a polynomial time reduction from the transshipment feasibility problem on {\em temporal} networks with piecewise constant edge travel times and capacities to the transshipment feasibility problem on {\em static} networks. 
That is, given a temporal network and a set of supply values for TFP  with $n$ nodes and $m$ edges and where edge $ij$ has $\mu_{ij}$ pieces in its piecewise constant capacity and length functions, Hoppe and Tardos \cite{hoppe2000} construct a static network on $n+4\mu$ nodes and $5\mu$ edges, where $\mu:=\sum_{ij\in E}\mu_{ij}$. If the original temporal network had $k$ terminals (nodes with non-zero supply), the new static network has $k+2\mu$ terminals. Hoppe and Tardos \cite{hoppe2000} show that TFP can be solved on a static network of $n_s$ nodes, $m_s$ edges, and $k_s$ terminals by finding the minimum value of a submodular function on the terminals. The evaluation oracle for this submodular function is an iteration of minimum cost flow on a steady-state network with $n_s$ nodes and $m_s$ edges. Table \ref{tab:sfm} shows runtimes that are possible for the static TFP using different submodular function minimization algorithms.

\input{table_sfm}


The best known combinatorial algorithm for steady-state min cost flow on networks with $n_s$ nodes and $m_s$ edges is due to Orlin \cite{orlin88} and runs in time $\tilde O(m_s^2)$\footnote{We use $\tilde O$ to suppress terms that are polylogarithmic in $n$ or $\mu$  in big $O$ notation.} on a steady-state network with $m_s$ edges, and the fastest weakly polynomial time algorithm, due to Chen et al. \cite{chen2022}, runs in $ O(m_s^{1+o(1)}\log U)$ time on such a network, where $U$ is the maximum  capacity of any edge in the network.
Using our assumption that in our temporal networks, the number of changes to the network is much larger than the number of edges or nodes (i.e. $\mu>>m,n$) and the temporal-to-static reduction of Hoppe and Tardos \cite{hoppe1995}, this implies that the transshipment feasibility problem on temporal networks can be solved in strongly polynomial time $\tilde O(\mu^5)$ or weakly polynomial time $ O(\mu^{3+o(1)}\cdot \log^{O(1)}(TU))$. 

Notably, the transshipment feasibility problem takes as input a set of supplies $v_i$ on the nodes in the network. Thus, if we want to reduce the max flow over time problem to the transshipment feasibility problem, we can set the source and sink flow values to sum to $0$ and search for the maximum value that leads to a feasible problem. This can be done by pairing a TFP solution with binary search or parametric search.
Parametric search \cite{megiddo1978,van2002} roughly squares the runtime of an algorithm that uses only comparisons, additions, and multiplications, and binary search (or exponential search) will increase the runtime by approximately a factor of $\log F\st$, where $F\st$ is the true max flow value. Thus, using the temporal-to-static reduction of Hoppe and Tardos \cite{hoppe2000}, the maximum flow value for a dynamic network over a time horizon $T$ can be found in time $ \tilde O(\mu^{3+o(1)}\log^{O(1)}(UT))$ or in strongly polynomial time $\tilde O(\mu^{10})$.\footnote{There may be better ways to optimize search for the optimal maximum flow  value with these specific algorithms, but a naive implementation and analysis of binary or parametric search with the algorithms mentioned above leads to these runtimes. }

The algorithms we have  discussed for static TFP work for {\em any} network that has piecewise constant edge capacities and travel times. If we restrict the input to have lengths that are static (not changing with time) and uniform (the same) across all edges, the min cost flow algorithms used as a subroutine in many of these algorithms may get faster, as the costs become uniform. However, the temporal-to-static reduction of Hoppe and Tardos \cite{hoppe2000} does {\em not} preserve the uniformity of edge lengths and in fact creates new edges with lengths that depend on the times at which edge capacities change.\footnote{Despite the fact that edge lengths are not preserved, the reduction of Hoppe and Tardos \cite{hoppe2000} does create a network with a very specific structure that may be exploitable in other ways.} 

In this paper, we consider the special case of the temporal maximum flow problem in which all edges have uniform, static length. We show that this special case can be solved in time $O(MF(\Theta(n^2\mu),\Theta(mn\mu)))$, where $MF(n',m')$ is the runtime of a steady-state maximum flow algorithm on $n'$ nodes and $m'$ edges. Using the $O(m'n')$ max flow algorithm of Orlin \cite{orlin2013}, we can thus solve this problem in strongly polynomial $O(\mu^2n^3m)$ time, or using the  algorithm of Chen et al. \cite{chen2022}, we can solve the problem  in time $ O(\mu^{1+o(1)}(nm)^{1+o(1)}\cdot \log (nTU))$.\footnote{ Although our algorithm only finds the {\em value} of the temporal maximum flow, and not a particular flow, the temporal-to-static TFP reduction of Hoppe and Tardos \cite{hoppe2000} paired with the quickest transshipment solution of Anapolska et al. \cite{anapolska2025} can be used to find a feasible flow of a specified value (if it exists) in time $\tilde O(\mu^7)$. Our runtimes and the ones we compare them to are specifically for finding the {\em value} of the maximum flow. } Table \ref{tab:maxflow} compares runtimes of different algorithms for this problem.

In the regime where $\mu>>n,m$ (i.e. the graph undergoes many changes relative to the number of edges and nodes), we get a better runtime for the maximum flow problem in both the strongly and weakly polynomial settings. In particular, if $m,n\leq\log(\mu)$, we get runtimes of $ O(\mu^2)$ and $ O(\mu^{1+o(1)}\log (nTU))$, as compared to existing runtimes of $\tilde O(\mu^{10})$ or $ O(\mu^{3+o(1)}\log^{O(1)}(nTU))$. 
Even in a less extreme regime, such as where $n,m=O(\mu^{1/4})$, we get runtimes $O(\mu^3)$ and $ O(\mu^{3/2+o(1)}\log(nTU))$. 

\input{tab_maxflowval}

\paragraph*{Overview of techniques}

Existing methods for finding the maximum flow over time with temporal edge capacities either apply to the setting in which all edge lengths are $0$ \cite{ogier1988,fleischer1999} or go through the Hoppe-Tardos reduction to the static transshipment feasibility problem \cite{hoppe2000}. The latter approach works for general edge lengths, including piecewise constant edge lengths. Current techniques for the static transshipment feasibility problem define a function $o(A)$ for any $A$ that is a subset of terminals in the network such that $o(A)$ is the maximum amount of flow that can be sent from the sources (nodes with positive supply) in $A$ to the sinks (nodes with negative supply) in $A$ over time horizon $T$. The transshipment feasibility problem is feasible on a network if and only if for all subsets $A$ of the terminals, $o(A)-v(A)\geq 0$, where $v(A):=\sum_{i\in A}v_i$. The smallest value of $o(A)-v(A)$  can be  found using submodular function minimization.

For a fixed set $A$, the function $o(A)$ can be computed by finding the maximum flow over time on a static network with a single super-source and a single super-sink. The maximum flow over time problem for static networks can be reduced to the classical steady-state minimum cost flow problem, so
min cost flow algorithms for steady-state networks can be used as an evaluation oracle for $o(A)-v(A)$. Note however that if we use the temporal-to-static reduction of Hoppe and Tardos and then reduce the static flow over time problem to steady-state min cost flow, we ultimately work on a network of $\Theta(\mu)$ terminals, nodes, and edges, so the submodular function minimization and min cost flow runtimes will depend on $\mu$. See Figure \ref{fig:htreduction} for a visualization of the reduction process defined by Hoppe and Tardos \cite{hoppe2000}.\footnote{Note also that the reductions in this figure are specifically for the {\em feasibility} of the transshipment problem. To find an actual transshipment meeting the given parameters, Hoppe and Tardos \cite{hoppe2000} use an additional reduction that goes through another problem that they refer to as lexicographical maximum flow.} 

\begin{figure}
    \centering
    \includegraphics[width=\linewidth]{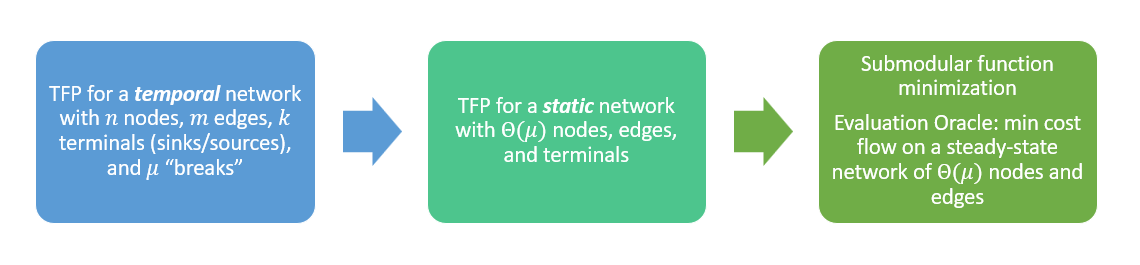}
    \caption{Hoppe and Tardos \cite{hoppe2000} gave a reduction from  TFP on temporal networks with $n$ nodes, $m$ edges, and $k$ terminals to the TFP on static networks with $\Theta(\mu)$ nodes, edges, and terminals, where $\mu:=\sum_{ij\in E}\mu_{ij}$ and $\mu_{ij}$ is the number of pieces in the piecewise constant capacity function for edge $ij$. They also show that the TFP can be solved for a static network on $n'$ nodes and $m'$ edges using submodular function minimization where the evaluation oracle is a steady-state/classical min cost flow on a network with $O(n')$ nodes and $O(m')$ edges. }
    \label{fig:htreduction}
\end{figure}

Our main novel technical contribution is in defining what we call the condensed Time Extended Network (cTEN) by identifying a set of ``critical points'' in the full time expanded network (TEN). The TEN for a network with time horizon $T$ is a static network on node set $V\times[0,T]$\footnote{In this paper, we use $[a,b]$ for $a\leq b$ to refer to the set $\set{a,a+1,a+2,\ldots,b-1,b}$.} whose maximum flow value is equal to the maximum flow over time $T$ on the original network.  We show that if the original network has uniform edge lengths, there is a small set of times $\cT$ such that the TEN must contain a minimum $(s,0)$-$(d,T)$ cut in which if $(i,t)$ is on $(d,T)$'s side of the cut and $(i,t+1)$ is on $(s,0)$'s side of the cut for any $i\in V$, then $t\in \cT$. By the structure of the TEN, we also have that if the min cut has finite capacity, $(i,t)$ being on $(s,0)$'s side of the cut implies $(i,t+1)$ is also on $(s,0)$'s side of the cut, so this set $\cT$ fully captures the times where a node can ``switch sides'' of the cut. The existence of such a set lets us build the cTEN, which is the same as the standard TEN, except that it groups the nodes of $V\times [0,T]$ into intervals whose start times are in $\cT$, reducing the node set to $V\times \cT$. We show that if $\cT$ is chosen appropriately, the minimum $(s,0)$-$(d,T)$ cut value on the cTEN will be the same as the minimum cut value on the full TEN. Thus, the maximum flow value will also be the same. Then the max flow value can be found using a single steady-state maximum flow call on a network of $O(n^2\mu)$ nodes and $O(\mu nm)$ edges, rather than the multiple calls to minimum cost flow algorithms on networks of $\Theta(\mu)$ nodes and edges used by existing techniques. Further, our single call can directly find the maximum flow  over time value, rather than checking if a given flow value is feasible, so it does not require an extra layer of parametric or binary search to find the optimum. A drawback is that it only works on networks with uniform edge lengths.\footnote{Technically the algorithm works for networks whose edge lengths are all static and either $0$ or $\tau$ for some fixed value $\tau$.}

\subsection*{Related work}


A wide variety of problems have been studied in the context of  both steady-state flows and flows over time, and many works focus on different regimes -- such as different parameters that should be minimized or maximized, the number of sources and sinks in the network, and the temporal properties of the graph parameters. We briefly outline some of the most relevant variants of this problem and associated results here. This section is optional and not necessary to understanding our proofs, which only require a familiarity with steady-state (classical) max flow and min cut theorems, but we hope to outline where our results fit in the current landscape.

\begin{itemize}
    \item {\bf Maximum steady-state flows.}
    The notion of steady-state flows was first introduced by Ford and Fulkerson \cite{ford1956}. In the maximum steady-state flow problem we are given a network on $n$ nodes and $m$ directed edges, each with a specified capacity that upper bounds the flow that can be moved over that edge. We are asked for the maximum amount of flow we can move from the source $s$ to the  sink $d$ such that the net flow into each node in the network other than $s$ or $d$ is $0$ and all edge capacity constraints are respected.\footnote{Note that in a flow over time problem with an infinite time horizon, the edge lengths do not matter and we simply look for a constant rate of flow that maximizes the total flow into $d$, which is the same as the classical max flow problem and why we refer to it as the steady-state problem.}

    In a crucial use of strong duality, Ford and Fulkerson \cite{ford1956} showed that the value of a maximum steady-state $s$-$d$ flow in a network is equal to the value of the minimum $s$-$d$ cut in that network. An $s$-$d$ cut in a network with node set $V$ is a set $S\subseteq V$ such that $s\in S,d\notin S$, and the cost of such a cut is the sum of the capacities of all edges that begin in $S$ and end in $V\setminus S$.
    
    Many authors have found strongly polynomial \cite{edmondskarp,blockingflows,pushrelabel,orlin2013} as well as pseudopolynomial  and weakly polynomial solutions \cite{ford1956,goldbergrao1998,chen2022} for maximum steady-state flow. In particular, Orlin \cite{orlin2013} showed that the problem on a network of $m$ edges and $n$ nodes can be found in time $O(nm)$, and Chen et al. \cite{chen2022} gave an algorithm that finds a (fractional) steady-state maximum flow in time $O(m^{1+o(1)}\log U)$, where $U$ is the maximum edge capacity. 

    \item {\bf  Minimum  cost  steady-state flows and circulations.} In the  minimum cost steady-state flow problem, we are again given a network on $n$ nodes and $m$ edges, but now each edge of the network is associated with some cost per unit of flow as well as a capacity bound. We are also given a source $s$, sink $d$ and a target flow value $F\st$. The problem asks for a flow through the network that moves $F\st$ total material from $s$ to $d$ and has minimum cost among all such flows. The minimum cost circulation problem is similar, except that the net in-flow amount at {\em every} node must be $0$. The two problems are largely interchangeable via small network changes.
    
    The minimum cost steady-state flow problem also dates back to Ford and Fulkerson \cite{ford1958}, and many authors have found efficient algorithms for the minimum cost flow algorithm \cite{goldbergtarjan1989,goldbergtarjan1990,goldbergrao1998,orlin1997,orlin88,chen2022}. In particular, Orlin \cite{orlin88} gave an algorithm that works in $\tilde O(m^2)$ time, and Chen et al. \cite{chen2022} gave one that works in $O(m^{1+o(1)}\log U)$ time, where $m$ is the number of edges in the network and $U$ is the maximum edge capacity.

    \item {\bf Maximum flows over time on static networks.} 
    In the maximum flow over time problem for static networks, we are given a network on $n$ nodes and $m$ edges, as well as a constant flow capacity per time step and length for each edge. Given a source $s$, sink $d$, and time horizon $T$, the goal is to find a flow over each edge across time steps $0$ through $T$ such that (1) edge capacities are not violated at any time step, (2) the net  flow across the entire time period into every node other than $s$ and $d$ is $0$ and it is never negative at any point in time, and (3) the net flow into $d$ across the entire time period is maximized. Some variations of flow over time problems further restrict the ``storage'' at each node; that is, they provide extra parameters that bound the net amount of in-flow that each node can have at {\em any} time in the time period. 

    Ford and Fulkerson \cite{ford1958} first introduced maximum flows over time, and they showed that for static networks the problem can be reduced to the minimum cost circulation problem. However, there are many variations of maximum flow over time for static networks. Skutella 
    \cite{Skutella2009} provides a nice summary of these variations and some existing results, and we briefly outline some of these problems here. It turns out that for static networks, storage capacity at non-sink nodes is not needed, so existing optimality results are for any choice of bounds on storage capacity.
    
    \begin{itemize}
        \item {\it Quickest flow.} In the quickest flow problem, we are given a desired flow value $F\st$ and asked for the shortest time horizon $T$ such that there exists a flow over time $T$ of value $F\st$ from $s$ to $d$. The continuous and discrete time versions of this problem have been studied by many authors including \cite{burkard1993,fleischer2002,fleischer2007,saho2017}.

        \item {\it Universal maximum flow/earliest arrival flow.} In the universal or earliest arrival flow problem, we are given a time horizon $T$ and asked for a flow over time horizon $T$ such that for all $t\leq T$ the amount of flow that has been delivered from $s$ to $d$ by time $t$ is the maximum flow possible over time horizon $t$. Perhaps surprisingly, several authors have found that such a flow always exists for static networks, and they have given algorithms to find such a flow \cite{Tjandra2003,baumann2009earliest,schmidt2014}.

    \end{itemize}

    \item {\bf Transshipments  on static networks.} 
    In the feasibility variant of the transshipment problem for static networks, we are given a static network with $n$ nodes and $m$ edges, a time horizon $T$, and a supply at every node in the network. The problem asks if there is a flow over time horizon $T$ that results in a net out-flow amount at each node exactly equal to its supply. Many authors have studied this problem \cite{hoppe1995,hoppe2000,schloter2017,schloter2022,skutella2023,anapolska2025}. 
    Notably, Hoppe and Tardos  \cite{hoppe2000} gave the first algorithm to decide if a given transshipment is feasible, and they showed how to find such an integral flow, if it exists, using a feasibility oracle for the problem. Later, Anapolska et al. \cite{anapolska2025} gave a more efficient algorithm for finding an integral transshipment. 
    
    Another popular transshipment problem is that of the {\em quickest transshipment}. In this problem, we are given a set of demands and asked for the {\em smallest} time horizon $T$ such that there is a feasible transshipment over time horizon $T$ that meets the given set of demands. Note that once the optimal time horizon is known, an algorithm that takes in a fixed time horizon and finds a feasible transshipment is sufficient for finding a quickest transshipment. Hoppe and Tardos  \cite{hoppe2000} showed how to find the optimal time horizon for the quickest transshipment problem on static networks, and Schl\"oter et al. 
    \cite{schloter2022} later gave an algorithm with improved runtime. 

    These results assume unbounded storage at each node in the network.


    \item {\bf Extensions to temporal networks.} 
    Transshipment and flow over time problems can be extended to the setting of temporal networks, or networks whose parameters, such as edge lengths, edge capacities, and node storage capacities can vary over time. It turns out that storage is essential for polynomial time solutions for even the simplest variant of this problem. 

    In particular, Zeitz \cite{zeitz2023} showed  via reduction to the subset-sum problem that given a network with temporal edge capacities (or edge lengths) and no storage capacity at any node as well as a time horizon $T$, it is NP-hard to decide if it is possible to move one unit of flow from $s$ to $d$ over time horizon $T$. In fact, it is hard to decide if $0$ or $F$ flow is achievable, for any choice of $F>0$, so the maximum flow over a given time horizon is not even approximable if there is no storage capacity at any node. Note that the temporal-to-static reduction of Hoppe and Tardos for the quickest transshipment problem means this hardness result also applies to transshipments on static networks without storage.

    Other authors have studied a variety of flow problems on temporal networks with unbounded storage capacity at every node.

    \begin{itemize}
        \item {\it Maximum flow over time.} Akrida et al.
        \cite{akrida2019}  
        gave an algorithm for the maximum flow over time problem on temporal networks with bounded node storage that takes time polynomial in the number of time steps at which some edge has non-zero capacity and for which all edge lengths are $0$. They also use a type of condensed time extended network, but their approach is quite different from ours, and their algorithm is not polynomial in the input size when the input includes arbitrary but concisely specified piecewise constant capacities. 
        
    
        \item {\it Universal maximum flow/earliest arrival flow.} Ogier
        \cite{ogier1988} and Fleischer \cite{fleischer1999} gave algorithms to find earliest arrival flows for networks with piecewise constant capacities in which all travel times are $0$. Our approach here is similar in spirit to the approach of  Ogier \cite{ogier1988}, though we deal with uniform (but non-zero) travel times and we produce the value maximum flow and do not consider an earliest arrival flows. 

        \item {\it Transshipments.}
         In their seminal work, Hoppe and Tardos \cite{hoppe2000} gave a reduction for the transshipment feasibility problem on temporal networks with piecewise constant edge capacities to the same problem on static networks. This reduction takes a temporal network on $n$ nodes and $m$ edges such that $\mu:=\sum_{ij\in E}\mu_{ij}$, where $\mu_{ij}$ is the number of constant pieces in the capacity function for edge $ij$ and produces a static network on $\Theta(\mu)$ nodes and edges. Notably, this reduction generates demand on $\Theta(\mu)$ nodes, and many of the edges have lengths that depend on the breaktimes of the piecewise constant capacity functions in the input. 
    \end{itemize}


\end{itemize}

%% file: table_sfm.tex
\begin{table}[]
    \centering
    \begin{tabular}{c|c}
       Runtime  & SFM algorithm  \\
       \hline 
       $O(k_s^3 \log k_s\cdot MCF(n_s,m_s)+k_s^4\log k_s)$  & Jiang, Lee, Song, and Zhang \cite{jiang2024convex} \\
       $O(k_s^2\log(nUT)\cdot MCF(n_s,m_s) + k_s^3\cdot \log^{O(1)}(nUT))$ & Lee, Sidford, and Wong \cite{lee2015sfm}
    \end{tabular}
    \caption{This table shows the runtime of TFP for a time horizon $T$ on static networks with $n_s$ nodes, $m_s$ edges, $k_s$ terminals, and $U$ maximum edge capacity using different submodular function minimization (SFM) algorithms. In this table, $MCF(n_s,m_s)$ is the runtime for minimum cost flow on steady-state networks of $n_s$ nodes and $m_s$ edges. }
    \label{tab:sfm}
\end{table}

%% file: tab_maxflowval.tex
\begin{table}[]
    \centering
    \begin{tabular}{c|c|c}
        Algorithms used & runtime & network type  \\
        \hline 
         \cite{hoppe2000}, \cite{jiang2024convex} & $O(\mu^3\log \mu \cdot MCF(\mu,\mu) + \mu^4 \log \mu)$ & all networks$\st$ \\
         \cite{hoppe2000}, \cite{lee2015sfm} & $O(\mu^2\log(nUT)\cdot MCF(\mu,\mu) + \mu^3 \cdot \log^{O(1)}(nUT))$ & all networks$\st$ \\
         \cite{hoppe2000}, \cite{orlin1997} \cite{jiang2024convex} & $O(\mu^5\log^2 \mu) $  & all networks$\st$ \\
         \cite{hoppe2000}, \cite{chen2022},\cite{lee2015sfm} & $O(\mu^{3+o(1)}\log^2(nUT) + \mu^3\log^{O(1)}(nUT))$ & all networks$\st$ \\
         this paper & $O(MF(n^2\mu,mn\mu))$ & uniform edge lengths \\
         this paper, \cite{orlin2013} & $O(\mu^2\cdot ( n^3m))$ & uniform edge lengths \\
         this paper, \cite{chen2022} & $O((\mu nm)^{1+o(1)}\log(nUT))$ & uniform edge lengths
    \end{tabular}
    \caption{This table compares the time to find the maximum flow over time $T$ for a temporal network using different techniques. In particular, it presents the time to find such a maximum flow value (not necessarily a specific flow) for a temporal network with $n$ nodes, $m$ edges, and maximum edge capacity $U$ such that the edge capacity functions are piecewise constant, with at most $\mu$ total constant pieces across all edge capacity functions. $MCF(n',m')$ is the runtime of a steady-state minimum cost flow algorithm on a network of $n'$ nodes and $m'$ edges, and $MF(n',m')$ is the runtime of a steady-state maximum flow algorithm on a network of the same size. \\    
    $\st$These algorithms verify if a given flow value is possible. To find the largest possible flow value, they can be paired with a search method such as parametric or binary search. }
    \label{tab:maxflow}
\end{table}

%% file: def.tex
\section{Definitions and main result}


We begin by formally defining a general version of temporal networks. Intuitively, a temporal network is parameterized by its edge capacities and lengths and its node storage capacities. The edge capacities specify how much flow  can traverse that edge in a single time step; the lengths specify how long it takes flow to traverse that edge; and the node storage capacities denote how much flow each node can store between time steps.

\begin{definition}
A time-varying, or temporal, network \\ $N=(V,E,\{\tau_{ij}:\integers\rightarrow \reals_{>0}\}_{ij\in E},\{\capac_{ij}:\integers\rightarrow \reals_{\geq 0}\}_{ij\in E},\set{a_i:\integers\rightarrow \reals_{\geq 0}}_{i\in V\setminus\set{s,d}})$ consists of 
\begin{itemize}
    \item a set of vertices $V$,
    \item a set of directed edges $E\subseteq V\times V$,
    \item length functions $\tau_{ij}$ for each edge,
    \item capacity functions $\capac_{ij}$ for each edge that denote the capacity of edges at each integral time step, and
    \item node storage capacity functions $a_i$ that denote the amount of flow that can be stored at a particular node at each time step.
\end{itemize}
\end{definition}

Further, if $\tau_{ij},u_{ij},$ and $a_i$ are constant for all $i,j\in V$, we refer to the network as {\em static}. In this paper, we will assume $a_i(t)=\infty$ for all $i,t$ and as a result we generally exclude $a_i$ from the definition of a temporal network, but first we present the definition of flows over time in full generality.
We define $[0,T]:=\set{0,1,\ldots,T}$.






\begin{definition}
    An $s$-$d$ {\em flow over time $T$} in a network $(V,E,\set{\tau_{ij}},\set{\capac_{ij}},\set{a_i})$ is a set of functions $f_{ij}:[0,T]\rightarrow \reals_{\geq 0}$ for each $ij\in E$ such that the following hold:
    \begin{enumerate}
        \item For all integers $t\in [0,T],ij\in E$, $f_{ij}(t)\leq u_{ij}(t)$ \change{(edge capacity constraint)}

        
        \item For all $t\in [0,T],i\in V\setminus\{s\}$,  \change{(net flow into a node is non-negative at all times)} 

        \begin{align*}
            0\leq \sum_{j:ji\in E}\sum_{t'=0}^{t-\tau_{ji}} f_{ji}(t')  -\sum_{j:ij\in E}\sum_{t'=0}^t f_{ij}(t') 
        \end{align*}

        Further, if the $a_i$ are not uniformly infinite, then for $i\in V\setminus \set{s,d},t\in [0,T]$, \change{(net flow into each node does not exceed storage capacity at that node for that time)}
        \begin{align*}
            \sum_{j:ji\in E}\sum_{t'=0}^{t-\tau_{ji}} f_{ji}(t')  -\sum_{j:ij\in E}\sum_{t'=0}^t f_{ij}(t') \leq a_i(t),
        \end{align*}
       
        \item For all $i\in V\setminus\{s,d\}$, \change{(net flow over the whole period is $0$ for all but $s$ and $d$)}        
            \begin{align*}
                 \sum_{ji\in E}\sum_{t=0}^{T-\tau_{ji}} f_{ji}(t)  -\sum_{ij\in E}\sum_{t=0}^T f_{ij}(t)=0,
            \end{align*}

    \end{enumerate}

    Further, the {\em value} of flow $f$ at over time horizon $T$ is the net flow into $d$ at time $T$, which is
         \begin{align*}
            |f|_{T}:=\sum_{j:jd\in E}\sum_{t=0}^{T-\tau_{jd}} f_{jd}(t)  -\sum_{j:dj\in E}\sum_{t=0}^T f_{dj}(t).
        \end{align*}

    When $T$ is clear from context, we may write $|f|$ instead of $|f|_T$.
\end{definition}

This is a generalization of the original notion of flow introduced by Ford and Fulkerson \cite{ford1956}, and in fact a similar model was introduced by the same authors not much later \cite{ford1958}. The ``standard'' notion of maximum flow, which we refer to as steady-state max flow, is equivalent to  max flow over time in which all travel times are $0$ and $T=0$. (Note that in this setting, we can just define $f_{ij}$ at a single time step $t=0$, and constraint (2) is redundant, so we only need constraints (1) and (3).) 

\textbf{Further, in this paper we make a significant assumption about the structure of networks we consider.} In particular, we assume that there exists some universal constant $\tau$ such that $\tau_{ij}$ is constant and equal to $\tau$ for all $ij\in E$.
Additionally, we assume each $u_{ij}$ is a piecewise constant function and changes value on at most $\mu_{ij}$ distinct time steps. For a given network with edge set $E$, we denote $\mu:=\sum_{ij\in E}\mu_{ij}$. We assume that network capacity functions are specified by listing the value of that function over each interval during which it is constant, so the input size is at least $\mu$. Additionally, we will focus on networks where $\mu $ is very large compared to $n$ and $m$, so our main goal will be to limit dependence on $\mu$.



\subsection*{Time Extended Networks (TENs)}

Steady-state flows have been extensively studied, so we focus on a translation of flow over time problems on temporal networks to steady-state flow problems. The most basic way to relate these two problems is using a Time Extended Network of the temporal network $N$, which we denote as $\ten(N)$.

\begin{definition}
For a network $N=(V,E,\set{\tau_{ij}},\set{u_{ij}})$, source $s$, sink $d$, and time horizon $T$ we define $\ten(N,T)$ as a classical (steady-state) flow network $G$ with the following characteristics:
\begin{itemize}
    \item The vertex set is $V'=V\times [0,T]$.
    \item There is an edge from $(i,t)$ to $(j,t')$ for $i\neq j$ if and only if $ij\in E$, $t'=t+\tau_{ij}$, and $u_{ij}(t)\neq 0$. Further, the capacity of this edge is $u_{ij}(t)$.
    \item For all $i$ and all $t\in [0,T-1]$, there is an edge from $(i,t)$ to $(i,t+1)$ with infinite capacity.
    \item The source is $(s,0)$ and the sink is $(d,T)$.
\end{itemize}

If $T$ is clear from context, we may just write $\ten(N)$.
\end{definition}

For any $s,d$ and any $T$, the maximum $s$-$d$ flow over time value in $N$ is the same as the maximum steady-state $(s,0)$-$(d,T)$ flow in $\ten(N,T)$ \cite{anderson}. Thus, one solution to this problem is to run any steady-state max flow algorithm on $\ten(N,T)$.
However, the size of $\ten(N,T)$ blows up with the value of $T$, so just writing it down could take time exponential in the size of the description of $N$ and $T$.

To exploit the succinct structure of our networks (that is, the structure of networks with piecewise constant capacities), we define a refined version of $\ten(N,T)$, which we call a {\em condensed time extended network}, or $\cten(N,A,T)$ (where $A\subseteq[0,T]$ is a set to be defined). The idea of $\cten(N,A,T)$
 is that $A$ will partition $[0,T]$ into intervals such that each interval begins at some time in $A$.
 
\begin{definition}
If $N=(V,E,\set{\tau_{ij}},\set{u_{ij}})$ is a temporal network with source $s$ and sink $d$ and $A\subseteq [0,T]$ with $0,T\in A$, let $t_1< t_2< \cdots< t_k$ be the times in $A$. We define $\cten(N,A,T)$ as a classical (steady-state) flow network as follows:
\begin{itemize}
    \item The vertex set is $V':=V\times A$.

    \item For all $x\in V$ and all $1\leq j <k$, there is an edge from $(x,t_j)$ to $(x,t_{j+1})$ with infinite capacity.\footnote{Here we use infinity to mean a special (finitely represented) character that is larger than any finite number in comparison. It also works to replace each infinite capacity edge with an edge of capacity larger than the maximum flow (for example, larger than $nUT$, where $U$ is the maximum edge capacity). }
    
    \item For all $xy\in E$ and all $1\leq i,j <k$, there is an edge from $(x,t_i)$ to $(y,t_{j})$ with the following capacity if it is non-zero:
    \begin{align*}
        \sum_{t:(t_i\leq t<t_{i+1}),(t_j\leq t+\tau_{ij} <t_{j+1})}u_{xy}(t).
    \end{align*}

\end{itemize}
Further, we treat $(s,0)$ as the source and $(d,T)$ as the sink. For brevity, we may write $\cten(N,A)$ if $T$ is clear from context.
\end{definition}

Note that $\cten(N,A,T)$ is just $\ten(N,T)$ with all nodes $(v,t)$ for $t\in[t_i,t_{i+1}-1]$ collapsed to a single node $(v,t_i)$, and in fact $\cten(N,[0,T],T)$ is the same as $\ten(N,T)$.
When $A$ is relatively small, $\cten(N,A)$ can be written much more concisely than $\ten(N)$. In the piecewise constant capacity case, the capacity of new edges can be computed efficiently as well by using multiplication over intervals instead of strictly adding as specified above. 

\subsection*{Finite cuts and cut functions}

Cuts are a fundamental tool of flow analysis that we use here, so next we describe cuts on steady-state networks (i.e. networks in which all edge lengths are $0$ and with time horizon $0$). If $N=(V,E)$ is a steady-state network, an $s$-$d$ cut $C$ of $N$ for $s,d\in V$ is a partition of the nodes of $V$ such that $s$ and $d$ are on opposite sides of the partition. The cost of this cut is the collective capacity of edges that go from $s$'s part of the partition (denoted $C(s)$) to $d$'s part of the partition (denoted $C(d)$). A classic result of Ford and Fulkerson \cite{ford1956} states that the minimum cost $s$-$d$ cut on a steady-state network is the same as the maximum $s$-$d$ flow value. Due to the simple structure of cuts, we turn our attention to bounding the values of minimum cuts. 

In the remainder of the paper, we only consider finite capacity cuts. Note that if a steady-state network has an edge $ij$ of infinite capacity, a finite cut $C$ on that steady-state network cannot have $i\in C(s),j\in C(d)$. Crucially, this means that if $N$ is a temporal network, then any finite cut on $\ten(N,T)$ that includes $(i,t)\in C((s,0))$ must also have $(i,t')\in C((s,0))$ for all $t'\geq t$. We formalize this in the following observation. 

\begin{observation}\label{obs:cutfunctions}
Let $N$ be a temporal network with  vertex set $V$. Then for any $T$ and any finite cut $C$ on $\ten(N,T)$, $(i,t)\in C((s,0))\implies (i,t')\in C((s,0))$ for all $t'\geq t$ and all $i\in V$. The same is true for $\cten(N,A,T)$ where $A$ is any subset of $[0,T]$ with $0,T\in A$.
\end{observation}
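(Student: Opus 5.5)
The argument is a one-step contradiction that uses only the holdover edges, first in $\ten(N,T)$ and then in $\cten(N,A,T)$.

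Fix a finite cut $C$ on $\ten(N,T)$ and suppose $(i,t)\in C((s,0))$. It suffices to treat $t'=t+1$ with $t<T$; the general claim for all $t'\geq t$ (with $t'\leq T$, since these are the only vertices) then follows by induction on $t'-t$.

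By definition of $\ten(N,T)$, for every $t\in[0,T-1]$ there is an edge from $(i,t)$ to $(i,t+1)$ of infinite capacity. If $(i,t+1)$ were in $C((d,T))$, this edge would go from the source side to the sink side. Its infinite capacity would then count toward the cost of $C$, contradicting finiteness. Hence $(i,t+1)\in C((s,0))$. Chaining this step from $t$ up to any $t'\geq t$ gives the first statement.

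For $\cten(N,A,T)$ with $A=\{t_1<\cdots<t_k\}$ and $0,T\in A$, the argument is the same with the consecutive times of $A$ in place of consecutive integers. The definition supplies an infinite-capacity edge from $(i,t_j)$ to $(i,t_{j+1})$ for each $1\leq j<k$. Therefore $(i,t_j)\in C((s,0))$ forces $(i,t_{j+1})\in C((s,0))$, and induction on $j$ yields $(i,t_{j'})\in C((s,0))$ for every $t_{j'}\geq t_j$ in $A$.

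There is no real obstacle here. The only point needing care is convention: "infinite" means a capacity exceeding every finite value, or the large substitute from the footnote, which exceeds any minimum-cut value. In either reading, a cut containing such an edge has cost above every finite or minimum cut, which is exactly what the argument uses.
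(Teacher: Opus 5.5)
Your proposal is correct and uses the same reasoning as the paper. An infinite-capacity holdover edge from $(i,t)$ to $(i,t+1)$, or from $(i,t_j)$ to $(i,t_{j+1})$ in $\cten(N,A,T)$, cannot go from the source side to the sink side of a finite cut, and chaining these edges gives the claim; the paper states this in one line just before the observation, and your induction only makes the chaining explicit.
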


We represent finite capacity cuts in a $\ten$ using a {\em cut function} that maps every vertex $i\in V$ to the earliest time $t\in [0,T]$ such that $(i,t)\in C((s,0))$. Observation \ref{obs:cutfunctions} implies that there is a one-one correspondence between finite capacity cuts in the $\ten$ or $\cten$ and such cut functions.

\begin{definition}
    A cut function $\phi$ on $\ten(N,T)$ (or $\cten(N,A,T)$) maps vertices $i\in V$ to the interval $[0,T+1]$ (or to the set $A\cup\set{T+1}$) with $\phi(s)=0$ and $\phi(d)=T+1$, where $s$ is the source and $d$ is the sink. The cut function uniquely represents the $(s,0)$-$(d,T)$ cut $C_\phi$ with $C_\phi((s,0)):=\set{(i,t):i\in V,t\in [0,T],t\geq \phi(i)}$ (or $C_\phi((s,0)):=\set{(i,t):i\in V,t\in A,t\geq \phi(i)}$).
\end{definition}

We let $cost(\phi)$ be the cost of the $(s,0)$-$(d,T)$ cut defined by $\phi$. 

Note that finite capacity cuts in a $\cten$ form a subset of finite capacity cuts in a $\ten$. Correspondingly, cut functions for $\cten(N,A,T)$ have a smaller range, $A\cup\{T+1\}$, and therefore form a subset of cut functions for $\ten(N,T)$. Thus, we get the following observation. 

\begin{observation}
    For any $N, T$, and $A\subseteq [0,T]$ with $0,T\in A$, the capacity of a min $(s,0)$-$(d,T)$ cut in $\ten(N,T)$ is no larger than the capacity of a min $(s,0)$-$(d,T)$ cut in $\cten(N,A,T)$.
\end{observation}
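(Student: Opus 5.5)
The plan is to show that every finite $(s,0)$-$(d,T)$ cut in $\cten(N,A,T)$ corresponds to a finite $(s,0)$-$(d,T)$ cut in $\ten(N,T)$ of exactly the same cost. Taking the minimum over a subset of the $\ten$ cuts then gives the inequality. If the minimum cut in the $\cten$ is infinite, there is nothing to prove, so we restrict attention to finite cuts.

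Let $\phi$ be a cut function on $\cten(N,A,T)$; its range is $A\cup\set{T+1}\subseteq[0,T+1]$. We regard the same $\phi$ as a cut function on $\ten(N,T)$, which gives the cut $C_\phi((s,0))=\set{(i,t):t\geq \phi(i)}$. Write $t_1<\dots<t_k$ for the elements of $A$ and set $t_{k+1}:=T+1$. The interval $I_j=[t_j,t_{j+1}-1]$ is the set of $\ten$ times collapsed into $\cten$ node $(\cdot,t_j)$. The key point is that $\phi(i)\in A\cup\set{T+1}$, so each interval $I_j$ lies entirely on one side of the $\ten$ cut for node $i$. Specifically, $(i,t)$ with $t\in I_j$ is on the source side if and only if $t_j\geq\phi(i)$, which holds if and only if $(i,t_j)$ is on the source side of the $\cten$ cut.

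Next we compare costs edge by edge.

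\textbf{Holdover edges.} In $\ten$, an edge $(i,t)\to(i,t+1)$ can never go from the source side to the sink side, because $t\geq\phi(i)$ implies $t+1\geq\phi(i)$. The same holds for the $\cten$ edges $(x,t_j)\to(x,t_{j+1})$. Hence both cuts are finite and neither pays anything for these edges.

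\textbf{Movement edges.} Group each $\ten$ edge $(x,t)\to(y,t+\tau)$ by the pair $(i,j)$ with $t\in I_i$ and $t+\tau\in I_j$. By the previous paragraph, this $\ten$ edge crosses the cut if and only if $(x,t_i)$ is on the source side and $(y,t_j)$ is on the sink side of the $\cten$ cut, i.e.\ if and only if the $\cten$ edge $(x,t_i)\to(y,t_j)$ crosses. Moreover, the capacity of that $\cten$ edge is by definition exactly $\sum_{t\in I_i,\,t+\tau\in I_j}u_{xy}(t)$, the total capacity of its group. Summing over all groups shows that $cost(\phi)$ is the same in both networks.

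The only step needing care is bookkeeping at the boundary, namely the final interval $I_k=\set{T}$ and the edges leaving it. These only exist when $\tau=0$, in which case an edge $(x,T)\to(y,T)$ may cross the cut. I will read the $\cten$ edge definition with the convention $t_{k+1}=T+1$, so that these edges are represented, and with edges landing beyond $T$ discarded in both networks. Under that convention the groups partition the $\ten$ edges exactly, and the argument above goes through unchanged. I expect this boundary check to be the main point to verify carefully; everything else is a direct unwinding of the definitions together with Observation~\ref{obs:cutfunctions}.
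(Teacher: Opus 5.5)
Your proposal is correct and takes the same route as the paper. The paper simply notes that cut functions of $\cten(N,A,T)$, whose range is $A\cup\{T+1\}$, are also cut functions of $\ten(N,T)$; your grouping of $\ten$ edges by interval pairs makes explicit the cost equality that the paper leaves implicit. One precision on your boundary remark: the paper's edge definition literally restricts to $1\leq i,j<k$, which drops the edges \emph{into} the final layer $t_k=T$ for every $\tau$, not only the edges leaving it when $\tau=0$, so your reading with $t_{k+1}=T+1$ and $i,j\leq k$ is genuinely needed for the groups to cover all $\ten$ edges.
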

    
Our goal is to argue that there is a {\em small} $\cten$ that captures the cut structure of the $\ten$, and in particular, has the same min cut capacity as the $\ten$. To this end, we focus on identifying the ``critical'' times corresponding to edges crossing a min cut in the $\ten$. For a cut function $\phi$, let $\critcut{\phi}$ denote the range of $\phi$, that is, 
\[\critcut{\phi}:=\{t: \exists i\in V \text{ with } \phi(i)=t\}.\] 
The following lemma then follows from the correspondence between cut functions of a $\cten$ and a $\ten$.

\begin{lemma}
\label{lem:criticaltimes}
    Let $\phi$ be a cut function corresponding to a min $(s,0)$-$(d,T)$ cut in $\ten(N,T)$, and let $A\subseteq [0,T]$ be any set with $\critcut{\phi}\cap [0,T]\subseteq A$. Then, any min $(s,0)$-$(d,T)$ cut in $\cten(N,A,T)$ is also a min $s$-$d$ cut in $\ten(N,T)$. In particular, the capacity of the min cut in $\cten(N,A,T)$ is equal to the capacity of the min cut in $\ten(N,T)$. 
\end{lemma}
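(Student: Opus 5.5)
The argument rests on the observation that, under the collapsing of nodes, the cuts of $\cten(N,A,T)$ are exactly the cuts of $\ten(N,T)$ whose cut functions take values in $A\cup\{T+1\}$, and that costs agree under this identification. With that in hand, the lemma follows by sandwiching the two min-cut values.

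\textbf{Step 1: every $\cten$ cut function is a $\ten$ cut function with the same cost.} Let $\psi$ be a cut function on $\cten(N,A,T)$, with range contained in $A\cup\{T+1\}$. I view $\psi$ as a cut function on $\ten(N,T)$ and check that the two costs coincide.

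Each non-holdover edge of $\cten(N,A,T)$ from $(x,t_i)$ to $(y,t_j)$ has capacity equal to the sum of the capacities of the $\ten$ edges $(x,t)\to(y,t+\tau)$ with $t\in[t_i,t_{i+1})$ and $t+\tau\in[t_j,t_{j+1})$. Since $\psi(x)\in A$, the $\ten$ node $(x,t)$ lies on the source side if and only if $t_i\ge\psi(x)$, and this condition depends only on the interval containing $t$. The same holds for $(y,t+\tau)$. So a $\cten$ edge crosses the cut exactly when all of its constituent $\ten$ edges cross it.

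For holdover edges: in the $\ten$, infinite holdover edges never go from the source side to the sink side under any cut function. The $\cten$ holdover edges behave the same way.

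Summing over all edges gives $cost_{\cten}(\psi)=cost_{\ten}(\psi)$. I will take some care with the top interval, where $t_k=T$ and $T+1$ is the "never on source side" value, but this is bookkeeping rather than a real difficulty.

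\textbf{Step 2: the min-cut values are equal.} Now take the min cut $\phi$ of $\ten(N,T)$ from the hypothesis. Every value of $\phi$ lies in $A\cup\{T+1\}$, because $\critcut{\phi}\cap[0,T]\subseteq A$. Hence $\phi$ is also a valid cut function on $\cten(N,A,T)$, and by Step 1 its $\cten$ cost equals $cost(\phi)$. Therefore
\[
\min\text{ cut in } \cten(N,A,T)\;\le\; cost(\phi) \;=\; \min\text{ cut in } \ten(N,T).
\]
The reverse inequality is the earlier Observation that the $\ten$ min cut is no larger than the $\cten$ min cut. So the two values are equal.

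\textbf{Step 3: conclusion.} Any min cut $\psi$ of $\cten(N,A,T)$ has $\ten$ cost equal to its $\cten$ cost by Step 1. That cost is the common minimum value from Step 2, so $\psi$ is also a min cut of $\ten(N,T)$.

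The main obstacle is the cost-equality check in Step 1, specifically handling the boundary conventions: the value $T+1$ and the last node $t_k=T$. Everything else is immediate.
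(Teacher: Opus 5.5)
Your proposal is correct and follows the paper's route: the paper justifies this lemma only by the correspondence between cut functions of $\cten(N,A,T)$ and cut functions of $\ten(N,T)$ with range in $A\cup\{T+1\}$, together with the preceding Observation, and your Steps 1--3 make that correspondence and the resulting two-sided bound explicit. The top-interval bookkeeping you flag in Step 1 matters more than cosmetically: you must read the cTEN definition with $t_{k+1}:=T+1$ and indices running up to $k$, as the paper's remark that $\cten(N,[0,T],T)$ equals $\ten(N,T)$ intends, since under the literal definition $\ten$ edges arriving at time $T$ (such as those into $(d,T)$) have no cTEN counterpart and the cost equality fails.
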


Note that it is {\em essential} that $\critcut{\phi}\cap [0,T]\subseteq A$. Consider the network depicted in Figure \ref{fig:basic-ten}. The collapsed nodes in the \cten\ merge nodes representing times between $1$ and $3$, and as a result the min cut value of the condensed network is much larger than that of the full network.

\begin{figure}
    \centering
    \includegraphics[width=.7\textwidth]{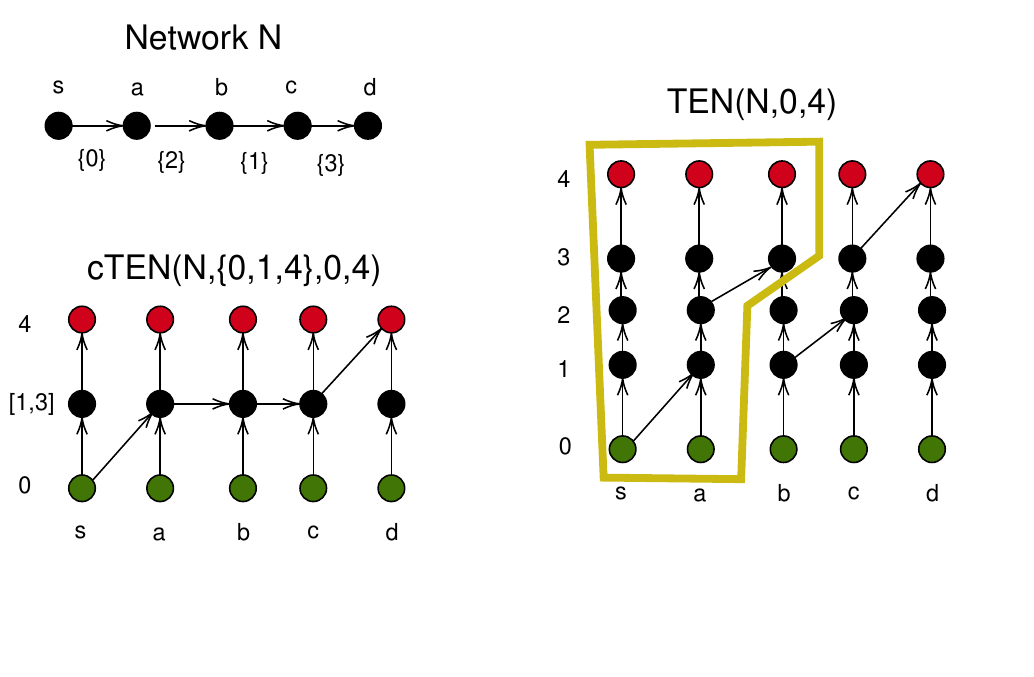}
    \caption{The top left depicts temporal network $N$, which consists of 5 nodes and in which each edge has infinite capacity for one time step (specified below the edge) and has zero capacity on all other time steps. On the right, we have a full time extended network of $N$ for the period $[0,4]$, and on the bottom left we have $\cten(N,\set{0,1,4},4)$. All edges in the networks have infinite capacity. Note that $\ten(N,4)$ has a min cut of capacity $0$ (depicted by the gold box), while $\cten(N,\set{0,1,4},4)$ has no finite $(s,0)$-$(d,4)$ cut.   }
    \label{fig:basic-ten}
\end{figure}

\subsection*{The critical breakpoints of a network}

Having established the basic structure of cuts in a $\ten$, we now identify a small set of times that contains all of the critical times of {\em some} min cut of $\ten(N,T)$. Such a small set $A$ can then be used to construct a small $\cten$ that captures the min cut capacity of the $\ten$ as per Lemma~\ref{lem:criticaltimes}.
Intuitively, we construct our set of critical times to include all times at which some edge changes capacity, as well as certain ``offset'' times from each such change. The idea is that ``information'' about the change in some edge capacity may propagate down a path of nodes in the network, so the offset values should correspond to some possible path length in the network. As we have set $\tau$ to be uniform across all edges, all paths in the network have length $\ell\cdot \tau$ for some $\ell\in [0,n]$. Thus, we define our critical times as follows.



\begin{definition} \label{def:forbidden}
Let $N=(V,E,\tau,\set{u_{ij}})$ be a network with piecewise constant $u_{ij}$, and let $\breaks$ be the set of times at which some $u_{ij}$ changes value. Also include $0$, $T$, and $T+1$ in $\breaks$. Then 
\begin{align*}
    \critnet{N}:= \set{\theta \pm \ell\cdot \tau:\theta\in \breaks,\ell\in [0,n]}.
\end{align*}
\end{definition}

As before, we will let $\mu_{ij}$ be the number of pieces in $u_{ij}$ and $\mu:=\sum_{ij\in E}\mu_{ij}$.
Note that if $\cT$ is as in Definition \ref{def:forbidden}, $|\cT|\leq O(n\mu)$. As an example, consider a network on $n$ nodes such that only one edge ever changes capacity, and it only does so at time $t'$. The critical times of this network would be $\set{0,T}\cup \set{(t'-n\cdot \tau),(t'-(n-1)\cdot \tau ),\ldots,(t'-\tau),t',(t'+\tau),\ldots, (t'+(n-1)\cdot \tau),(t'+n\cdot \tau)}$.  



We can now state our main technical lemma, which we prove in Section \ref{sec:flow-value}:
\begin{lemma}
\label{lem:breakpoints}
    For any $\ten(N,T)$ there exists a cut function $\phi$ corresponding to a min $(s,0)$-$(d,T)$ cut with $\critcut{\phi}\subseteq\critnet{N}$.
\end{lemma}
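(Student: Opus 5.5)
Throughout, $\cT$ denotes $\critnet{N}$. Among all cut functions of minimum cost, pick $\phi$ minimizing a potential, say the number of nodes $i$ with $\phi(i)\notin\cT$. We will show this number is zero: if some value lies outside $\cT$, we shift a group of nodes to produce another minimum cut with strictly fewer such nodes. This is a local-perturbation argument that uses linearity of the cost on suitable segments.

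\textbf{The cost as a function of $\phi$.} Write the cost as
\[
cost(\phi)=\sum_{ij\in E}\ \sum_{t=\phi(i)}^{\phi(j)-\tau-1} u_{ij}(t),
\]
with the inner sum taken as empty when $\phi(j)-\tau\le\phi(i)$ and times clipped to $[0,T]$. This counts edges $(i,t)\to(j,t+\tau)$ with $t\ge\phi(i)$ and $t+\tau<\phi(j)$. Each term depends only on the pair $(\phi(i),\phi(j))$.

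\textbf{The perturbation step.} Suppose some value $t_0=\phi(i_0)\notin\cT$. Let $S=\{i:\phi(i)\in t_0+\tau\integers\}$, and for $i\in S$ write $\phi(i)=t_0+k_i\tau$. Consider the family $\phi_\delta$ obtained by setting $\phi_\delta(i)=\phi(i)+\delta$ for $i\in S$ whose value is not in $\cT$, and leaving all other nodes fixed. We need the set of moved nodes to be closed enough that the cost is affine in $\delta$ on an interval around $0$. Three observations support this.
\begin{enumerate}
\item For an edge $ij$ with both endpoints moved, $\phi(j)-\phi(i)$ is unchanged. Because no breakpoint lies in the interval (see the third point), the summation range simply translates through a region where each $u_{ij}$ is constant. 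The contribution is therefore constant, or, if the boundary condition $\phi(j)-\tau\le\phi(i)$ flips, it is still constant since the difference is fixed.
\item For an edge with exactly one endpoint moved, the contribution changes by $\pm u_{ij}(\cdot)$ per unit of $\delta$, evaluated at a time near $\phi(i)$ or $\phi(j)-\tau$. This is locally constant unless that time hits a breakpoint, or the empty-sum threshold $\phi(j)-\tau=\phi(i)$ is crossed. Crossing the threshold would require the fixed endpoint's value to differ from the moved one by exactly $\tau$, which membership in $S$ forbids.
\item The offsets $\pm\ell\tau$ with $\ell\le n$ in $\cT$ guarantee the following. If every value $t_0+k\tau$ that occurs (at most $n$ distinct $k$) avoids $\cT$, then no breakpoint $\theta$ satisfies $\theta=t_0+k\tau$ or $\theta=t_0+k\tau\pm\tau$ for relevant $k$, because $|k|\le n$ relative to $t_0$. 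The extra shift of $\pm\tau$ needs care with the range $[0,n]$; a chain of at most $n-1$ steps keeps us in range.
\end{enumerate}

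\textbf{Concluding the argument.} Given local affinity, cost is affine in $\delta$ on an interval $[\delta^-,\delta^+]$ containing $0$. Minimality of $\phi$ forces the slope to be zero. We then push $\delta$ to an endpoint of the interval. At that endpoint one of three things happens: some moved value enters $\cT$, a moved value coincides with an unmoved value (which then joins the class), or we reach $0$ or $T+1$ (both in $\cT$). Each event decreases the potential, or merges classes, which is handled by using the lexicographic potential (number of off-$\cT$ nodes, number of distinct values). Iterating gives a min cut with $\critcut{\phi}\subseteq\cT$.

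\textbf{Main obstacle.} The hard part is choosing the moved set correctly. Moving only the nodes of a single residue class $t_0+\tau\integers$ that avoid $\cT$ might leave a neighbour fixed at exactly distance $\tau$ but with value in $\cT$. That neighbour would then be both ``in the class'' and pinned, breaking affinity at $\delta=0$. I would resolve this by showing that such a pinned neighbour forces the moved node into $\cT$. If $\phi(j)\in\cT$ is at distance $\tau$ along a chain, then $\phi(i)=\phi(j)\pm\tau$ is of the form $\theta\pm\ell\tau$. This is exactly why $\cT$ includes offsets up to $n\tau$: the connected chains within a class have length at most $n$. So I would define the moved set as the connected components, under the ``differ by exactly $\tau$ along an edge'' relation, that contain no node valued in $\cT$. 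I would then verify that the chain-length bound $\ell\le n$ suffices.
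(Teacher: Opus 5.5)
There is a genuine gap in how you choose the set to move. You move a component (under ``$|\phi(i)-\phi(j)|=\tau$ along an edge'') that contains no node whose value lies in $\critnet{N}$. You justify that such a component exists by claiming that a tight neighbour $j$ with $\phi(j)\in\critnet{N}$ forces $\phi(i)=\phi(j)\pm\tau$ into $\critnet{N}$. That claim is false. $\phi(j)\in\critnet{N}$ only means $\phi(j)=\theta+\ell\tau$ for some breakpoint $\theta$ and some $|\ell|\le n$. If $\ell=n$, then $\phi(i)=\theta+(n+1)\tau$ need not be critical. The $n$-hop bound controls chains of tight nodes that actually end at a breakpoint. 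It says nothing about chains ending at a node whose value merely happens to be critical. So, assuming some value is off-critical, nothing you prove stops every off-critical node from sharing its component with a critical-valued node. In that case your rule admits no move and the argument stalls. Yet such a component may be perfectly movable, since $j$ itself need not be tight to any breakpoint; your criterion cannot see this.

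The missing idea is to pin nodes to the breakpoints themselves rather than to critical values. The paper builds the pinned assignments graph $G_\phi$ on $V$ plus the breakpoints. It joins $i,j$ when $|\phi(i)-\phi(j)|\in\{0,\tau\}$, and joins $i,\theta$ when $|\phi(i)-\theta|\in\{0,\tau\}$. If every node reaches a breakpoint, the path has at most $n$ edges, so every value is in $\critnet{N}$. Otherwise there is a component $C$ containing no breakpoint. In it, no node sits at $\theta$ or $\theta+\tau$, and no edge leaving $C$ is tight; this is exactly what Claim~\ref{lem:local-op} needs to give $cost(\moveup{C})=cost(\phi)$. The paper takes the min cut maximizing $\sum_i\phi(i)$, so this single unit shift is already a contradiction, and no affine-interval pushing is needed.

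Your pushing step is also where the bookkeeping breaks. Affinity ends when a moved value reaches distance exactly $\tau$ from an unmoved neighbour, not when values coincide. Such a merge need not reduce the number of distinct values, so your lexicographic potential need not drop.

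Separately, your first observation is wrong as stated. An edge with both endpoints moved does not have constant contribution: $[\phi(i),\phi(j)-\tau-1]$ may contain breakpoints, and the contribution changes at rate $u_{ij}(\phi(j)-\tau)-u_{ij}(\phi(i))$. It is affine, which is all you actually use, but the stated reason is incorrect.
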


Note that Lemma \ref{lem:breakpoints} combined with Lemma \ref{lem:criticaltimes} implies that the maximum $s$-$d$ flow over $T$ on $N$ is exactly the steady-state maximum flow value on $\cten(N,A,T)$, where $A=\critnet{N}$. 

Using Lemma \ref{lem:breakpoints} and a short argument to bound the size of $\cten(N,A,T)$, we get our main theorem.

\begin{theorem}
If $N=(V,E,\tau,\set{u_{ij}})$ is a temporal network with unbounded node storage capacities, piecewise constant edge capacities $u_{ij}$ and a constant and uniform edge length $\tau$ on all edges, Let $\mu_{ij}$ be such that $u_{ij}$ has at most $\mu_{ij}$ pieces. Then the maximum $s$-$d$ flow value on $N$ over time horizon $T$ can be found in time $MF(O(n^2\mu),O(mn\mu))$, where $n=|V|,m=|E|,$ $\mu:=\sum_{ij\in E}\mu_{ij}$, and $MF(n',m')$ is the runtime for a steady-state (classical) max flow algorithm on a network with $n'$ nodes and $m'$ edges.
\end{theorem}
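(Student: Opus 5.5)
The plan is to reduce the theorem to Lemma~\ref{lem:breakpoints} and Lemma~\ref{lem:criticaltimes}, and then spend the remaining effort on bounding the size of the condensed network and the time needed to build it.

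\textbf{Correctness.} Let $A:=\critnet{N}\cap[0,T]$. Every element of $\critnet{N}$ has the form $\theta\pm\ell\tau$, which may lie outside $[0,T]$; we intersect with $[0,T]$ and add $0$ and $T$, which $\breaks$ already contains. Lemma~\ref{lem:breakpoints} gives a min-cut function $\phi$ on $\ten(N,T)$ with $\critcut{\phi}\subseteq\critnet{N}$. Hence $\critcut{\phi}\cap[0,T]\subseteq A$. Lemma~\ref{lem:criticaltimes} then says the min cut capacity of $\cten(N,A,T)$ equals that of $\ten(N,T)$. By max-flow/min-cut on both steady-state networks, and by the equivalence of max flow over time in $N$ with max flow in $\ten(N,T)$ \cite{anderson}, the steady-state max flow value of $\cten(N,A,T)$ is exactly the desired value. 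So the algorithm is: compute $A$, build $\cten(N,A,T)$, and run one max flow call.

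\textbf{Size bound.} The set $\breaks$ has at most $\mu+3$ elements, so $|A|\le(2n+1)(\mu+3)=O(n\mu)$. The node set $V\times A$ therefore has $O(n^2\mu)$ nodes. The holdover edges $(x,t_j)\to(x,t_{j+1})$ number $n|A|=O(n^2\mu)$.

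For the transit edges, fix $xy\in E$ and an interval $I_i=[t_i,t_{i+1})$. Because all lengths equal $\tau$, the times $t+\tau$ for $t\in I_i$ form one contiguous interval $I_i+\tau$. I will not count only the intervals $I_j$ meeting $I_i+\tau$ per $i$, since that can overcount. Instead I count pairs $(i,j)$ globally. As $i$ increases, the shifted intervals $I_i+\tau$ tile $[\tau,T+\tau)$ in order. Thus the pairs $(i,j)$ with $(I_i+\tau)\cap I_j\neq\emptyset$ correspond to the pieces of the common refinement of two interval partitions, and there are at most $2|A|$ of them. (Indeed, $A$ is nearly closed under shifts by $\tau$, so most pairs are one-to-one.) This gives $O(|A|)=O(n\mu)$ edges per $xy\in E$, and $O(mn\mu)$ edges in total.

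\textbf{Construction time.} Each capacity is $\sum_{t\in I_i\cap(I_j-\tau)}u_{xy}(t)$ over an interval on which $u_{xy}$ is piecewise constant. Because every breakpoint of $u_{xy}$ lies in $A$, $u_{xy}$ is in fact constant on each $I_i$. The capacity is therefore a single product (length of the overlap)$\times u_{xy}(t_i)$. Computing all of them takes $O(mn\mu)$ time after sorting $A$, using a two-pointer merge per edge. This is dominated by $MF(O(n^2\mu),O(mn\mu))$.

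The main obstacle I expect is the edge-count step. The naive bound of $|A|^2$ pairs per edge would give $O(mn^2\mu^2)$ edges, so the argument must use the uniform length $\tau$ to show that each $I_i$ maps into a contiguous range of the $I_j$. It must also show that the total number of overlapping pairs is linear in $|A|$, via the monotone-merge argument above.
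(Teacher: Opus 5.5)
Your proposal is correct and follows the paper's proof: invoke Lemmas~\ref{lem:breakpoints} and~\ref{lem:criticaltimes} with $A$ built from $\critnet{N}$, count $O(n^2\mu)$ nodes and storage edges, and bound the transmission edges by $O(|A|)$ per $ij\in E$ using only that $\tau$ is static. The one place you argue differently is that last count: your common-refinement argument (merging the partitions $\{I_i+\tau\}$ and $\{I_j\}$) gives at most $2|A|$ pairs more directly than the paper's singleton-set counting, which gives $5|A|$, and your intersection of $\critnet{N}$ with $[0,T]$ and your construction-time remark fill in details the paper leaves implicit.
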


\begin{proof}
Lemmas \ref{lem:criticaltimes} and \ref{lem:breakpoints} imply that we just need to solve the steady-state max flow problem on $\cten(N,A,T)$ with $A:=\critnet{N}$.  As $|A|=O(n\mu)$, $\cten(N,A,T)$ has $n\cdot |A|=O(n^2\mu)$ nodes. This leaves us with bounding the number of edges in $\cten(N,A,T)$.

We will count two types of edges. Let $A$ consist of $t_1<t_2<\cdots<t_p$

\begin{itemize}
    \item {\bf Storage edges.} Each node $(i,t_l)$ for $l<p$ has an edge to $(i,t_{l+1})$, which we call a storage edge. The number of storage edges is bounded by the number of nodes, which is $O(n^2\mu)$.

    \item {\bf Transmission edges.} Each node $(i,t_l)$ has an edge to $(j,t_{l'})$ iff there exists $t\in [t_l,t_{l+1}-1]$ such that: (1) $u_{ij}(t)>0$ and (2) $t+\tau\in [t_{l'},t_{l'+1}-1]$.

    To make our upper bound slightly simpler, we will count the number of edges $\cten(N,A,T)$ contains when we instead include an edge from $(i,t_l)$ to $(j,t_l')$ iff:

    \begin{enumerate}
        \item $ij\in E$ (i.e. there exists a time $t$ such that $u_{ij}(t)>0$), and

        \item there exists $t\in [t_l,t_{l+1}-1]$ such that $t+\tau\in [t_{l'},t_{l'+1}-1]$.
    \end{enumerate}
    
Note that this may increase the number of edges in the cTEN (by adding some edges of $0$ capacity), but it never decreases the number of edges.


    Fix $i$ and $j$ such that $ij\in E$ and let $S_l$ be the set of $l'$ such that $(i,t_l)$ has an edge to $(j,t_{l'})$. Then the number of edges from some $(i,t)$ to some $(j,t')$ for all choices of $t,t'\in A$ is $\sum_{l=1}^p|S_l|$.

    Note that $k+1\in S_l$ implies $k\notin S_{l'}$ for all $l'>l$. Further, $k+1\in S_{l-1}$ implies $k'\notin S_l$ for all $k'<k+1$.  These are both true by condition (2) and the fact that $\tau$ is static. Thus, if some index $k$ first appears in set $S_l$ and last appears in set $S_{l'}$, then $|S_{l''}|=1$ for all $l<l''<l'$.

     Let $n_k$ be the number of sets $S_l$ that $k$ appears in. If $n_k\geq 3$, then $n_k-2$ sets $S_l$ (namely, all sets between the first and the last set containing $k$) must contain {\em only} $k$. 
     
     \begin{align*}
    \sum_{l=1}^{p}|S_l| &=\sum_{k=1}^pn_k \\
    &\leq \sum_{k:n_k\leq 2}2+\sum_{k:n_k>2}n_k \\
    &\leq 2p+\sum_{k:n_k>2}n_k.
     \end{align*}

As deduced above, we also know that there must be at least $\sum_{k:n_k>2}(n_k-2)$ singleton sets. The number of singleton sets cannot be larger than the total number of sets, which is $p$, so we get 

\begin{align*}
    p &\geq \sum_{k:n_k>2}(n_k-2) \\
    &=\sum_{k:n_k>2}n_k-\sum_{k:n_k>2}2 \\
    &\geq \sum_{k:n_k>2}n_k-2p,
\end{align*}
 as there are only $p$ possible values of $k$. This gives us $\sum_{k:n_k>2}n_k\leq 3p$. Thus, $\sum_{l=1}^p|S_l|\leq 2p+3p=5p$, where $p=|A|$. When $A=\critnet{N}$, $|A|\leq (2n+1)\mu$. Notably, this argument used the fact that the length of edge $ij$ was static (not changing over time), but it did not use the fact that it was the same as all other edge lengths, and uniformity over edge lengths is actually not necessary to reach this conclusion. 

 There are $m$ possible choices of $i$ and $j$ (as $ij\in E$), so we get that there are $O(nm\mu)$ transmission edges.
    
\end{itemize}

Thus, in total we have $O(n^2\mu)+O(nm\mu)=O(nm\mu)$ edges in $\cten(N,A,T)$ (assuming $m=\Omega( n)$, as otherwise we can remove any nodes not incident to an edge).
\end{proof}




Note that once a maximum flow value is known for the given network, we  can use the quickest transshipment algorithm of Anapolska et al. \cite{anapolska2025} and the temporal-to-static transshipment reduction of Hoppe and Tardos \cite{hoppe2000} to find a specific maximum flow in strongly polynomial time $\tilde O(\mu^7)$.

%% file: proofs.tex
\section{Finding the value of the max flow\label{sec:flow-value}}

In this section we  prove Lemma~\ref{lem:breakpoints}. Our goal is to show the existence of a min cut $\phi$ in $\ten(N,T)$ with $\critcut{\phi}\cap [0,T]\subseteq \critnet{N}$. At a high level, our argument proceeds as follows: We start with an arbitrary min cut. If it does not satisfy the property we need, we show that we can locally modify it without increasing its capacity. 




We first define the local operation on cuts that allows us to move from one min cut to another. Let $\phi$ denote a cut function in $\ten(N,T)$. We use $X_\phi$ to denote vertices that are mapped to $0$ or $T+1$ in $\phi$: $X_\phi:=\{i\in V: \phi(i)\in\{0,T+1\}\}$. Then for any subset $C\subseteq V\setminus X_\phi$, we can define two variants of the cut function $\phi$ -- one where the assignments to vertices in $C$ move ``up'' and the other where the assignments move ``down''. Formally,
\begin{definition}
    Given a cut function $\phi$ in $\ten(N,T)$ and a set $C\subseteq V\setminus X_\phi$, define the cut functions $\phi_C^+$ and $\phi_C^-$ as follows:
    \begin{itemize}
        \item $\phi_C^+(i) := \phi(i)+1$ for all $i\in C$, and $\phi_C^+(i) := \phi(i)$ otherwise. 
        \item $\phi_C^-(i) := \phi(i)-1$ for all $i\in C$, and $\phi_C^-(i) := \phi(i)$ otherwise. 
    \end{itemize}
\end{definition}



Next we identify properties the set $C$ needs to satisfy to ensure that $\phi_C^+$ and $\phi_C^-$ have the same (minimum) capacity as $\phi$. Informally, vertices outside of $C$ define certain ``forbidden'' points of time for vertices in $C$, and for each $i\in C$ we will decide if $\phi_C^+(i)$ and $\phi_C^-(i)$ differ from $\phi(i)$ based on whether or not $\phi(i)$ is in this forbidden set. If $\phi(i)$ is in the forbidden set, we adjust $\phi(i)$ by moving it up one (so it  switches to $s$'s side of the cut later on) or down one (so it switches to $s$'s side of the cut earlier on) in $\phi_C^+$ or $\phi_C^-$, respectively.
Throughout the rest of this section, let $\tau$ be the static travel time of all edges in the network $N$ and let $\cT:=\set{t|\exists ij\in E:u_{ij}(t)\neq u_{ij}(t-1)}\cup\set{0,T,T+1}$.


\begin{definition}
Given a cut function $\phi$ in $\ten(N,T)$, a set $C\subseteq V\setminus X_\phi$, and some vertex $i\in C$, define $\forbidden{\phi}{C}{i}$ as the union of the following sets: 
\begin{itemize}
        \item $\breaks\cup\{\theta+\tau : \theta\in\breaks\}$ 
        \item $\{\phi(j)-\tau,\phi(j) : ij\in E \text{ and } j\in V\setminus C\}$ 
        \item $\{\phi(j)+\tau,\phi(j) : ji\in E \text{ and } j\in V\setminus C\}$ 
\end{itemize}
\end{definition}

We can prove that if $\phi(i)$ does not fall into one of the ``forbidden'' points for $(i,\phi,C)$, then $\phi_C^+$ and $\phi_C^-$ have the same cost as $\phi$. 

\begin{claim}
\label{lem:local-op}
    Given a network $N$ and min $s$-$d$ cut function $\phi$, let $C\subseteq V\setminus X_\phi$ satisfy $\phi(i)\notin \forbidden{\phi}{C}{i}$ for all $i\in C$. Then $cost(\phi_C^+)=cost(\phi_{C}^-)=cost(\phi)$.
\end{claim}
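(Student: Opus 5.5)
Write $\delta^+:=cost(\phi_C^+)-cost(\phi)$ and $\delta^-:=cost(\phi_C^-)-cost(\phi)$. The plan is to show $\delta^+=-\delta^-$; minimality of $\phi$ gives $\delta^+,\delta^-\ge 0$, so both vanish. Before starting I check that $\phi_C^\pm$ are legitimate cut functions. Since $C\cap X_\phi=\emptyset$, every $i\in C$ has $\phi(i)\in[1,T]$, so $\phi(i)\pm1\in[0,T+1]$, and $\phi(s),\phi(d)$ are unchanged.

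The cost of a cut function is a sum of transmission-edge contributions. For each $ij\in E$, the edge $(i,t)\to(j,t+\tau)$ is counted exactly when $t\ge\phi(i)$ and $t+\tau<\phi(j)$, that is, for $t\in[\phi(i),\phi(j)-\tau-1]\cap[0,T-\tau]$. So
\[cost(\phi)=\sum_{ij\in E}\ \sum_{t=\phi(i)}^{\phi(j)-\tau-1}u_{ij}(t),\]
with the sum truncated to $[0,T-\tau]$. Storage edges are never cut, by Observation~\ref{obs:cutfunctions}. Edges with both endpoints outside $C$ contribute identically to all three costs. I would therefore split the remaining edges into three cases: (a) $i\in C$, $j\notin C$; (b) $i\notin C$, $j\in C$; (c) $i,j\in C$.

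\textbf{Case (c).} Both endpoints shift by the same $\pm1$, so the summation window slides by one. The change is $u_{ij}(\phi(j)-\tau)-u_{ij}(\phi(i))$ for a shift up and $u_{ij}(\phi(i)-1)-u_{ij}(\phi(j)-\tau-1)$ for a shift down, up to boundary truncation. This is the step I expect to need the most care, because it is not immediately clear how the two forms cancel. My approach is to use that $\phi(i)\notin\breaks$ gives $u_{ij}(\phi(i)-1)=u_{ij}(\phi(i))$, and that $\phi(j)\notin\breaks\cup(\breaks+\tau)$ gives $u_{ij}(\phi(j)-\tau-1)=u_{ij}(\phi(j)-\tau)$. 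With these equalities the two changes are exact negatives of each other. If the window is empty both before and after the shift, both changes are zero. I still need to check the borderline case where the window becomes empty or nonempty after a shift; there the window lengths are $\phi(j)-\phi(i)-\tau\le 0$ in every cut, so both changes are $0$ anyway. Truncation at $0$ or $T-\tau$ is handled because $0,T\in\breaks$ and $T+1-\tau\in\breaks+\tau$.

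\textbf{Case (a).} Only the lower endpoint $\phi(i)$ moves. The change is $-u_{ij}(\phi(i))$ for a shift up and $+u_{ij}(\phi(i)-1)$ for a shift down, but only when $\phi(i)$ lies strictly inside the window up to $\phi(j)-\tau-1$. The forbidden condition $\phi(i)\neq\phi(j)-\tau$ ensures the window does not switch between empty and nonempty asymmetrically. Then $\phi(i)\notin\breaks$ makes the two changes negatives of each other.

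\textbf{Case (b).} Symmetrically, the upper endpoint $\phi(j)$ moves, with $j\in C$. The change is $+u_{ij}(\phi(j)-\tau)$ for a shift up and $-u_{ij}(\phi(j)-\tau-1)$ for a shift down. The exclusion $\phi(j)\neq\phi(i)+\tau$ for $ij\in E$ (listed in the definition as $ji\in E$ with roles renamed) guarantees the boundary behaves symmetrically. Then $\phi(j)\notin\breaks+\tau$ gives equality of the two terms.

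The additional exclusions $\phi(i)\neq\phi(j)$ in the definition are stronger than needed; they would only matter for $\tau=0$, where they play the same boundary role. Summing the three cases gives $\delta^+=-\delta^-$, and minimality of $\phi$ then forces $\delta^+=\delta^-=0$.
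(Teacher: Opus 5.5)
Your argument is correct and is essentially the paper's proof. Both show $\delta^+ + \delta^- = 0$ by pairing the discount and extra-cost terms edge by edge, using $\phi(i)\notin\breaks$ and $\phi(j)\notin\{\theta+\tau:\theta\in\breaks\}$ to equate the paired capacities, the forbidden offsets $\phi(j)-\tau$ and $\phi(i)+\tau$ to rule out the asymmetric boundary cases, and minimality of $\phi$ to conclude; your window sums over cases (a)--(c) are just a cleaner bookkeeping of the paper's accounting by which TEN nodes switch sides.

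One small remark: the truncation discussion is unnecessary, and $T+1-\tau\in\breaks+\tau$ is not true in general. Since $\phi$, $\phi_C^+$ and $\phi_C^-$ all take values in $[0,T+1]$, the window $[\phi(i),\phi(j)-\tau-1]$ automatically lies in $[0,T-\tau]$.
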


We defer the proof of this claim to the end of this section, but we briefly outline the idea here. Note that some number of copies of each edge $ij\in E$ crosses the cut defined by $\phi$. We want to show that if $\phi_C^+$ incurs some additional cost by having an extra copy of $ij$ cross it as compared to $\phi$, then $\phi_C^-$ obtains an equivalent ``discount'' by having one less copy of $ij$ cross it as compared to $\phi$. The reverse is also true -- that is $\phi_C^+$ gets a total discount value equal to the total extra cost value of $\phi_C^-$. As neither $\phi_C^+$ nor $\phi_C^-$ can be cheaper than $\phi$, it must be the case that all three cuts have the same cost. 

To give some intuition for the offsetting discounts and extra costs, note that the number of ``copies'' of $ij$ that cross the cut $\phi$ is at most $\max\set{0,\Delta_{ij}^\phi}$, where $\Delta_{ij}^\phi:=\phi(j)-(\phi(i)+\tau)$. If $i,j\in C$ or $i,j\in V\setminus C$, $\Delta_{ij}^\phi=\Delta_{ij}^{\phi_C^+}=\Delta_{ij}^{\phi_C^-}$, as $\phi(i)$ and $\phi(j)$ change by the same amount.
Alternatively, consider the case that $i\in C,j\in V\setminus C$. We have $\Delta_{ij}^{\phi_C^-}=\phi(j)-(\phi(i)-1+\tau)=\Delta_{ij}^\phi+1$. Likewise, $\Delta_{ij}^{\phi_C^+}=\Delta_{ij}^\phi-1$. Thus, if $\max\set{0,\Delta_{ij}^{\phi_C^+}}-\max\set{0,\Delta_{ij}^{\phi}}\neq \max\set{0,\Delta_{ij}^{\phi}}-\max\set{0,\Delta_{ij}^{\phi_C^-}} $ (i.e. the extra cost incurred by $\phi_C^+$ for this edge is not equal to the extra discount incurred by $\phi_C^-$), then we have $\max\set{0,\Delta_{ij}^\phi+1}-\max\set{0,\Delta_{ij}^\phi}\neq \max\set{0,\Delta_{ij}^\phi}-\max\set{0,\Delta_{ij}^\phi-1}$, which implies that $0=\Delta_{ij}^\phi=\phi(j)-(\phi(i)+\tau)$ (as $\Delta_{ij}^\phi$ must be an integer). However, since $j\in C$ and $ij\in E$, $\phi(j)-\tau$ is a forbidden point for $\phi(i)$ and this is impossible. Figure \ref{fig:bad-edges} gives a visual sense of this phenomenon, and the proof of Claim \ref{lem:local-op} at the end of the section gives a full and more  careful analysis.




\begin{figure}
    \centering
    \includegraphics[width=.48\textwidth]{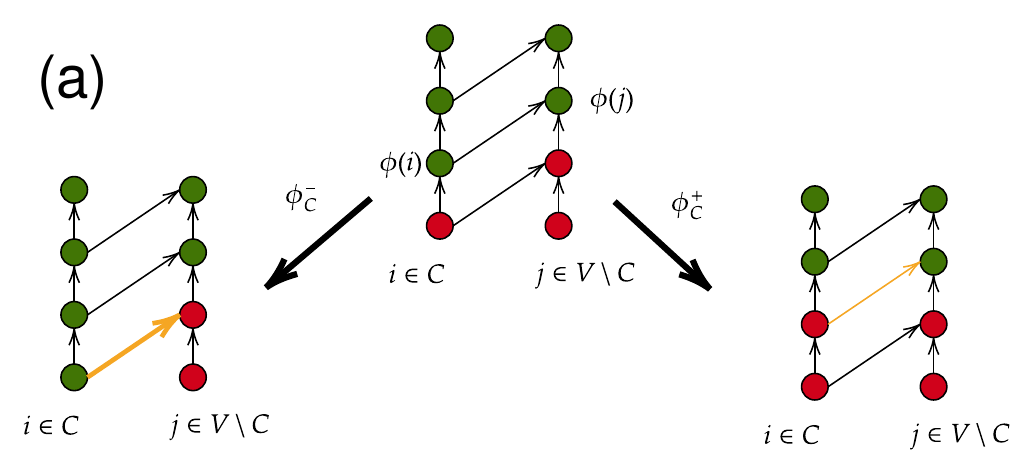}
    \includegraphics[width=.48\textwidth]{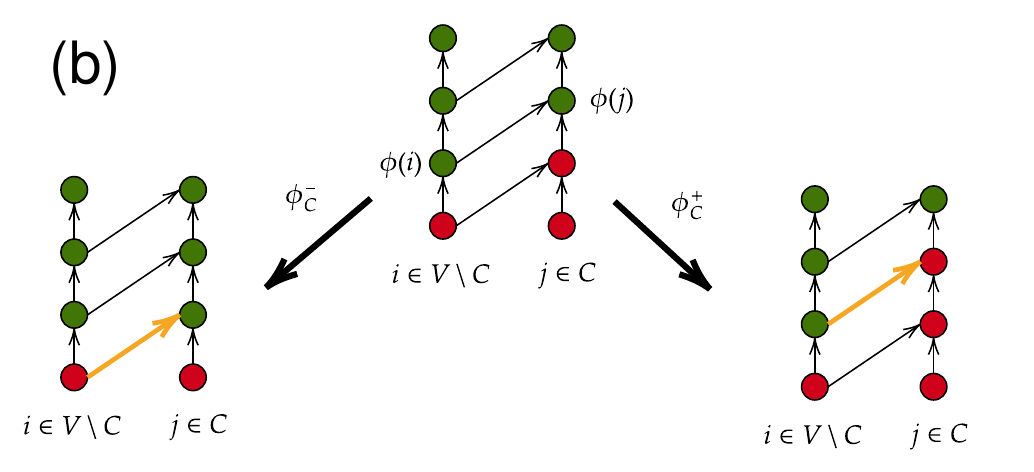}
    \caption{This figure provides a visualization of why certain points are forbidden when trying to shift $\phi$ up or down. In this example, the travel time is $1$, and green nodes are on $(s,0)$'s side of the cut, and red nodes are on $(d,T)$'s side. Thus, the edges that incur a cost for this cut are edges that go from green to red.   \\
    Subfigure (a) depicts a problem with edge cancellation in the proof of Claim \ref{lem:local-op} when there exists $i\in C$ and  $j\in V\setminus C$ with $\phi(i)=\phi(j)+\tau$. In particular for this graph, $\movedown{C}$ incurs an extra cost for edge $(i,\phi(i)-1)(j,\phi(j)-1)$ but $\moveup{C}$ does not incur an equivalent discount for the edge $(i,\phi(i))(j,\phi(j))$, as this edge was already free under $\phi$.
    \\
    Subfigure (b) depicts a problem with edge cancellation in the proof of Claim \ref{lem:local-op} when there exists $i\in V\setminus C$ and $j\in C$ with $\phi(i)=\phi(j)+\tau$. In particular, for this graph $\moveup{C}$ incurs an extra cost for the edge $(i,\phi(i))(j,\phi(j))$ but $\moveup{C}$ does not incur an equivalent discount for the edge $(i,\phi(i)-1)(j,\phi(j)-1)$, as this edge was already free under $\phi$.
    }
    \label{fig:bad-edges}
\end{figure}



Now we define auxillary graphs $G_\phi$ for each min cut $\phi$ that will help us in our proof. 

\begin{definition}
    The {\em pinned assignments graph} $G_\phi$ for $\ten(N,T)$ is an unweighted undirected graph on $V\cup \breaks$ with an edge set defined as follows: 
    \begin{itemize}
        \item Any edge $\set{i,j}$ is in $G_\phi$ for $i,j\in V$ iff $|\phi(i)-\phi(j)|\in \set{0,\tau}$.
        \item An edge $\set{i,\theta}$ is in $G_\phi$ for $i\in V$ and $\theta\in\breaks$ iff $|\phi(i)-\theta|\in \set{0,\tau}$.
    \end{itemize}
\end{definition}

We can now prove Lemma \ref{lem:breakpoints}. 

\begin{proof}[Proof of Lemma \ref{lem:breakpoints}]
Let $cost\st$ be the cost of a minimum $(s,0)$-$(d,T)$ cut in $\ten(N,T)$
For contradiction, suppose the lemma is false. That is, there does not exist a cut $\critcut{\phi}\cap [0,T]\subseteq \critnet{N}$. Among all cut functions of cost $cost\st$, let $\phi$ be a cut function that maximizes $\sum_{i\in V}\phi(i)$.
We show that since $\phi$ does not have $\critcut{\phi}\subseteq \critnet{N}$, we can find a new cut function $\phi'$ of the same cost as $\phi$ such that $\sum_{i\in V}\phi'(i)$ is larger than $\sum_{i\in V}\phi(i)$, thus getting a contradiction and proving our result. 

Recall that $G_\phi$ is the pinned assignments graph for the cut function $\phi$.
Consider a connected component of $G_\phi$ that does not contain any $\theta\in \cT$. Such a component must exist, as otherwise every node $i\in V$ has a path in $G_\phi$ to some $\theta\in \cT$. Let $(i=i_0,i_1,\ldots,i_k,\theta)$ be such a path. Then by definition of $G_\phi$, for each $0<j\leq k$, $|\phi(i_{j-1})-\phi(i_j)|\in \set{0,\tau}$, and $|\phi(i_k)-\theta|\in \set{0,\tau}$. However, this implies that $\phi(i)$ is offset from $\theta$ by $\ell\cdot \tau$, for some $\ell\in [-(k+1),(k+1)]$. However, there are only $n$ nodes in $V$, so $k\leq n-1$ and we get that $\phi(i)$ must be in $\critnet{N}$. Thus, there must be some $i$ without a path to any $\theta\in \cT$, and we select that connected component as $C$. 

We claim that for all $i\in C$, $\phi(i)\notin \forbidden{\phi}{i}{C}$. If this were not the case, then either (1) there exists some $i\in C,j\notin C$ such that $ij\in E$ and $|\phi(j)-\phi(i)|\in \set{0,\tau}$, or (2) there exists some $i\in C$ and $\theta\in \cT$ such that $|\phi(i)-\theta|\in \set{0,\tau}$. In the first case, there must be an edge in $G_\phi$ from $i$ to some $j$ not in $C$, which contradicts the definition of a connected component. In the second case, there must be an edge in $G_\phi$ from some $i\in C$ to some $\theta\in \cT$, which would imply that $\theta$ is in $C$. However, we chose $C$ so that it did not contain any such points. Thus, for all $i\in C$, $\phi(i)\notin \forbidden{\phi}{i}{C}$ and by Claim \ref{lem:local-op}, $cost(\phi_C^+)=cost(\phi)=cost\st$. Recall that for all $i$, $\phi_C^+(i)\geq \phi(i)$, and the inequality is strict when $i\in C$. Since $C$ contains at least one point in $V$,  $\sum_{j\in V}\phi_C^+(j)>\sum_{j\in V}\phi(j)$, but since $\phi_C^+$ defines a min cut, this contradicts how we picked $\phi$.
\end{proof}

%% file: appendix.tex
\subsection{Proof of Claim~\ref{lem:local-op} \label{sec:extra-proof}}


\begin{proof}[Proof of Claim~\ref{lem:local-op}]
Claim \ref{lem:local-op} is true for general networks with static (but not necessarily uniform) edge lengths. Thus we consider a more general version of $\forbidden{i}{C}{\phi}$, where  the set of forbidden points for $(i,\phi,C)$ is the union of the following sets:

\begin{itemize}
        \item $\breaks\cup\{\theta+\tau_{ij} : \theta\in\breaks\}$ 
        \item $\{\phi(j)-\tau_{ij},\phi(j) : ij\in E \text{ and } j\in V\setminus C\}$ 
        \item $\{\phi(j)+\tau_{ji},\phi(j) : ji\in E \text{ and } j\in V\setminus C\}$,
\end{itemize}

and we show that if $\phi(i)\notin \forbidden{i}{C}{\phi}$ for all $i\in C$, ten $cost(\moveup{C})=cost(\movedown{C})=cost(\phi)$. (Note that if $\tau_{ij}=\tau$ for all $ij\in E$, the forbidden set we've defined here is the same as in the original definition we gave.)

Because $\phi$ is a min cut, $cost(\phi)\leq cost(\moveup{C}),cost(\movedown{C})$. Thus, we must have that $(cost(\moveup{C})-cost(\phi))+(cost(\movedown{C})-cost(\phi))\geq 0$. If we can show equality, we get that $cost(\phi)=cost(\moveup{C})=cost(\movedown{C})$ and we are done.

Let $S_\phi:=\set{(i,t):\phi(i)\leq t}$ and let $D_\phi:=\set{(i,t):\phi(i)>t}$.

    \begin{enumerate}
        \item 
        Consider the contribution of an edge $(i,t)(j,t+\tau_{ij})$ to the cost of $\phi$ versus the contribution of the same edge to the cost of $\moveup{C}$. Note that the contribution is identical unless either $i\in C,t=\phi(i)$ or $j\in C,t+\tau_{ij}=\phi(j)$ (i.e. at least one endpoint changed sides of the cut). 

        First consider the case $i\in C,t=\phi(i)$. We have $(i,t)\in S_\phi,(i,t)\in D_{\moveup{C}}$. Thus, if $(j,t+\tau_{ij})\in D_\phi$ (i.e. $\phi(j)>t+\tau_{ij}$), we get a ``discount'' of $u_{ij}(t)$ for this edge, as $(j,t+\tau_{ij})\in D_{\moveup{C}}$ as well. If this is not the case, the edge costs us nothing under either cut.

        Now consider the case that $j\in C,t+\tau_{ij}=\phi(j)$. Note that this implies $(j,t+\tau_{ij})\in S_\phi$, so the edge $(i,t)(j,t+\tau_{ij})$ costs nothing under $\phi$. However, $(j,t+\tau_{ij})\in D_{\moveup{C}}$, so if $(i,t)\in S_{\moveup{C}}$, we incur a new cost $u_{ij}(t)$ for this edge under $\moveup{C}$. 
        If $i\in V\setminus C$, $(i,t)\in S_{\moveup{C}}$ if and only if $t\geq \moveup{C}(i)=\phi(i)$ , so we are in the expensive case if and only if $t\geq \phi(i)$. 
        If $i\in C$, then $(i,t)\in S_{\moveup{C}}$ if and only if $t\geq \moveup{C}(i)=\phi(i)+1$, so we are in the expensive case if and only if $t\geq \phi(i)+1$.



        Thus, putting together the extra costs and extra discounts we obtain from moving these endpoints and letting $u_{ij}(t)$ be $0$ for all times $t$ if the edge from $i$ to $j$ does not exist, we get 

        \begin{align*}
            cost(\moveup{C})-cost(\phi) &= \sum_{i\in C}\sum_{j:\phi(j)>\phi(i)+\tau_{ij}}-u_{ij}(\phi(i)) \\
            & \ \ \ \ \ \ \ \ + \sum_{j\in C}\sum_{i\in C:\phi(j)\geq \phi(i)+\tau_{ij}+1}u_{ij}(\phi(j)-\tau_{ij}) \\
            & \ \ \ \ \ \ \ \ + \sum_{j\in C}\sum_{i\in V\setminus C: \phi(j)\geq \phi(i)+\tau_{ij}}u_{ij}(\phi(j)-\tau_{ij})
        \end{align*}

    \item Now consider $cost(\movedown{C})-cost(\phi)$. We can proceed similarly to the first case. In this case, the contribution of an edge $(i,t)(j,t+\tau_{ij})$ is the same under the two cut functions unless either $i\in C,t=\phi(i)-1$ or $j\in C,t+\tau_{ij}=\phi(j)-1$. (Note that in this case, $(\ell,\phi(\ell)-1)$ is the only node of the \ten\ corresponding to $\ell\in V$ that may change sides of the cut, as an {\em additional} point, namely $(\ell,\phi(\ell)-1)$, is added to $S_{\movedown{C}}$, rather than the point above this being removed from $S_{\moveup{C}}$ as in the previous case.)

    Deal first with the case $i\in C,t=\phi(i)-1$. In this case, $(i,t)\in S_{\movedown{C}}$ and $(i,t)\in D_\phi$. thus, we incur an extra cost for this edge if and only if $(j,t+\tau_{ij})\in D_{\movedown{C}}$. If $j\in C$, this is true if and only if $t+\tau_{ij}<\movedown{C}(j)=\phi(j)-1$. If $j\in V\setminus C$, this is true if and only if $t+\tau_{ij}<\movedown{C}(j)=\phi(j)$. In either of these cases, we incur an additional cost of $u_{ij}(t)$. 

    Now consider the case that $j\in V\setminus C,t+\tau=\phi(j)-1$. 
    In this case, $(j,t+\tau_{ij})\in D_\phi$ and $(j,t+\tau_{ij})\in S_{\movedown{C}}$, so we incur a discount of $u_{ij}(t)$ for this edge under $\movedown{C}$ if $(i,t)\in S_\phi$ (which implies $(i,t)\in S_{\movedown{C}}$). This is true if and only if $t\geq \phi(i)$. 

    Putting together the extra costs and discounts, we get 
    \begin{align*}
        cost(\movedown{C})-cost(\phi) &= \sum_{i\in C}\sum_{j\in C:\phi(j)>\phi(i)+\tau_{ij}} u_{ij}(\phi(i)-1) \\
        &  \ \ \ \ \ \ \ \ + \sum_{i\in C}\sum_{j\in V\setminus C:\phi(j)>\phi(i)+\tau_{ij}+1}u_{ij}(\phi(i)-1) \\ 
        & \ \ \ \ \ \ \ \ + \sum_{j\in C}\sum_{i\in V: \phi(j)\geq \phi(i)+\tau_{ij}+1}-u_{ij}(\phi(j)-\tau_{ij}-1)
    \end{align*}

    \end{enumerate}

    Now we consider $(cost(\movedown{C})-cost(\phi))+(cost(\moveup{C})-cost(\phi))$, and we get

\begin{align*}
 &= \sum_{i\in C}\sum_{j:\phi(j)>\phi(i)+\tau_{ij}}-u_{ij}(\phi(i)) + \sum_{j\in C}\sum_{i\in C:\phi(j)\geq \phi(i)+\tau_{ij}+1}u_{ij}(\phi(j)-\tau_{ij}) \\
& \ \ \ \ \ \ \ \ + \sum_{j\in C}\sum_{i\in V\setminus C: \phi(j)\geq \phi(i)+\tau_{ij}}u_{ij}(\phi(j)-\tau_{ij})  + 
\sum_{i\in C}\sum_{j\in C:\phi(j)>\phi(i)+\tau_{ij}} u_{ij}(\phi(i)-1) \\
&  \ \ \ \ \ \ \ \ + \sum_{i\in C}\sum_{j\in V\setminus C:\phi(j)>\phi(i)+\tau_{ij}-1}u_{ij}(\phi(i)-1)  \\
& \ \ \ \ \ \ \ \  + \sum_{j\in C}\sum_{i: \phi(j)\geq \phi(i)+\tau_{ij}-1}-u_{ij}(\phi(j)-\tau_{ij}-1) \\
&= \sum_{i\in C}\bigg( \sum_{j:\phi(j)>\phi(i)+\tau_{ij}}-u_{ij}(\phi(i)) + \sum_{j\in C:\phi(j)>\phi(i)+\tau_{ij}}u_{ij}(\phi(i)-1) \\
& \ \ \ \ \ \ \ \  + \sum_{j\in V\setminus C:\phi(j)>\phi(i)+\tau_{ij}-1}u_{ij}(\phi(i)-1) \Bigg) \\
& \ \ \ + \sum_{j\in C}\Bigg( \sum_{i\in C:\phi(j)\geq \phi(i)+\tau_{ij}+1}u_{ij}(\phi(j)-\tau_{ij}) + \sum_{i\in V\setminus C:\phi(ij)\geq \phi(i)+\tau_{ij}}u_{ij}(\phi(j)-\tau_{ij}) \\
& \ \ \ \ \ \ \ \ + \sum_{i:\phi(j)\geq \phi(i)+\tau_{ij}+1}-u_{ij}(\phi(j)-\tau_{ij}-1)\Bigg)
\end{align*}

    

    First, fix an $i$ and consider a term of the first sum corresponding to $i$. In the case that $u_{ij}(\phi(i)-1)$ appears as a positive term in this sum, either ($\phi(j)>\phi(i)+\tau_{ij}$ and $j\in C$) or ($\phi(j)>\phi(i)+\tau_{ij}-1$ and $j\in V\setminus C$). In the first  case, $-u_{ij}(\phi(i)-1)$ also appears in the sum. In the second case, $-u_{ij}(\phi(i)-1)$ appears in the sum unless $\phi(j)=\phi(i)-\tau_{ij}$, which is impossible since $j\in V\setminus C$ and $\phi(j)=\phi(i)-\tau_{ij}$ implies $\phi(i)\in \forbidden{i}{C}{\phi}$. Thus, if $u_{ij}(\phi(i))$ appears in the sum, so does $-u_{ij}(\phi(i)-1)$ also appears in the sum. Further, $u_{ij}(\phi(i)-1)=u_{ij}(\phi(i))$, as otherwise $\phi(i)\in \cT$, which is impossible since $i\in C$ and the conditions of the claim thus forbid that.

    Now conisder the second sum and fix some $j\in C$. If $u_{ij}(\phi(j)-\tau_{ij})$ appears in the sum, then either ($i\in C$ and $\phi(j)\geq \phi(i)+\tau_{ij}+1$) or ($i\in V\setminus C$ and $\phi(j)\geq \phi(i)+\tau$). In the first case, $-u_{ij}(\phi(j)-\tau_{ij}-1)$ also appears in the sum. In the second case, $-u_{ij}(\phi(j)-\tau_{ij}-1)$ appears in the sum unless $\phi(j)=\phi(i)+\tau_{ij}$, which is impossible since $j\in C$ and $i\notin C$, so $\phi(i)+\tau_{ij}\in \forbidden{j}{C}{\phi}$. Thus, if $u_{ij}(\phi(j)-\tau_{ij})$ appears in the sum, $-u_{ij}(\phi(j)-\tau_{ij}-1)$ also appears in the sum. Further, $u_{ij}(\phi(j)-\tau_{ij})=u_{ij}(\phi(j)-\tau_{ij}-1)$, as otherwise $\phi(j)=\theta+\tau_{ij}$ for some $\theta\in \cT$,  which is impossible since $j\in C$ and the conditions of the claim forbid that.
    

\end{proof}

%% file: other_times.tex
\section{A note on networks with a small number of distinct edge lengths} \label{sec:otherlengths}

The algorithmic approach we have discussed so far relies on the assumption that all edges in the network have a uniform and static edge length $\tau$. In fact, this approach can be generalized to other types of networks with static (but not uniform) edge lengths, but the final runtime depends exponentially on the number of distinct edge lengths.

In particular, the only place we used the uniformity of edge lengths was in the proof of Lemma \ref{lem:breakpoints}, where we considered a pinned assignments graph that placed an edge between $i$ and $j$ if and only if $|\phi(i)-\phi(j)|\in \set{0,\tau}$. This allowed us to argue that if a node $i\in V$ in this graph had some path back to a breakpoint $\theta\in \cT$, $\phi(i)$ must be offset from $\theta$ by a multiple of $\tau$. Suppose instead we let $\Gamma$ be the set of possible edge lengths in our network, and we added $\set{i,j}$ to $G_\phi$ if and only if $|\phi(i)-\phi(j)|\in \set{0}\cup \Gamma$. Likewise, we can add $\set{i,\theta}$ to $G_\phi$ if and only if $|\phi(i)-\theta|\in \set{0}\cup \Gamma$. Then if $i$ has a path to $\theta\in \cT$ through $G_\phi$, a shortest such path must take the form $(i=i_0,i_1,\ldots,i_k,\theta)$ and there must exist coefficients $a_\gamma\in [-(k+1),(k+1)]$ for each $\gamma\in \Gamma$ such that $|\phi(i)-\theta|=\sum_{\gamma\in \Gamma}a_{\gamma}\cdot \gamma$. (In fact, $\sum_{\gamma\in \Gamma}|a_\gamma|\leq k+1$, but we use a simpler bound by bounding the range of each $a_\gamma$ individually.)

Thus, if we let the critical points for this network be all points $\theta +\sum_{\gamma \in \Gamma}a_{\gamma}\cdot \gamma$ for $a_{\gamma}\in [-n,n]$, we get that when the range of $\phi $ is not a subset of the critical points, there must exist $i\in V$ that is not connected to anything in $\cT$ in $G_\phi$ and the proof of Lemma \ref{lem:breakpoints} proceeds as before. 

There are at most $(2n)^{|\Gamma|}$ ways to select the coefficients $a_\gamma$, so the critical point set has size at most $O(\mu \cdot (2n)^{|\Gamma|})$ and thus the graph we need to run steady-state max flow on has $O(\mu\cdot (2n)^{|\Gamma|+1})$ nodes and $O(\gamma \cdot (2n)^{|\Gamma|+1}\cdot m)$ edges. When $|\Gamma|$ is some small constant $c$, this is $O(\mu \cdot n^{c+1})$ and $O(\mu\cdot n^{c}m)$. Thus, when the network has at most a constant $c$ number of distinct static edge lengths, the problem can be solved in strongly polynomial time $O(\mu^2\cdot m \cdot n^{2c+1})$.